\theoremstyle{plain}
\newtheorem{theorem}{Theorem}[section]
\newtheorem{proposition}[theorem]{Proposition}
\newtheorem{lemma}[theorem]{Lemma}
\theoremstyle{definition}
\theoremstyle{remark}
\newtheorem{remark}[theorem]{Remark}
\newcommand*\nota[1][\mathord{*}]{%
	\xdef\@thefnmark{\ensuremath{\m@th#1}}\@footnotemark\@footnotetext}\makeatother
\numberwithin{equation}{section}
\numberwithin{theorem}{section}
\renewcommand{\epsilon}{\varepsilon}
\renewcommand{\tilde}{\widetilde}
\renewcommand{\hat}{\widehat}
\renewcommand{\div}{\mathop{\rm div}\nolimits}
\newcommand{\asinh}{\mathop{\rm arcsinh}\nolimits}
\definecolor{light}{gray}{.9}
 \newcommand{\be}{\begin{equation}}
\newcommand{\en}{\end{equation}}
\title[Current fluctuations for zero-range process on graphs]{Current fluctuations for the boundary-driven zero-range process on graphs: microscopic versus macroscopic approach and a theory of non-reversible resistor-like networks}
\author[D.\ Gabrielli]{Davide Gabrielli}
\address{Davide Gabrielli\footnote{Corresponding author} \hfill\break \indent
  DISIM, Universit\`a dell'Aquila
  \hfill\break\indent
  Via Vetoio,   67100 Coppito, L'Aquila, Italy}
\email{davide.gabrielli@univaq.it}
\author[R.\ J.\ Harris]{Rosemary J. Harris}
\address{Rosemary J. Harris \hfill\break \indent
Department of Mathematics, University College London \hfill\break\indent
Gower Street, London, WC1E 6BT, UK}
 \email{rosemary.j.harris@ucl.ac.uk}
\begin{document}

\begin{abstract}
We compute the joint large deviation rate functional in the limit of large time for the current flowing through
the edges of a finite graph on which a boundary-driven system of stochastic particles evolves with zero-range dynamics. This generalizes one-dimensional results previously obtained with different approaches \cite{BDGJL1,HRS}; our alternative techniques illuminate various connections and complementary perspectives. In particular, we here use a variational approach to derive the rate functional by contraction from a level 2.5 large deviation rate functional. We perform an exact minimization and finally obtain the rate functional as a variational problem involving a superposition of cost functions for each edge. The contributions from different edges are not independent since they are related by the values of a potential function on the nodes of the graph. The rate functional on the graph is a microscopic version of the continuous rate functional predicted by the macroscopic fluctuation theory \cite{MFT}, and we indeed show a convergence in the scaling limit. If we split the graph into two connected regions by a cutset and are interested just in the current flowing through the cutset, we find that the result is the same as that of an effective system composed of only one effective edge (as happens at macroscopic level and is expected also for other models \cite{Cap}).  The characteristics of this effective edge are related to the ``capacities'' of the graph and can be obtained by a reduction using elementary transformations as in electrical networks; specifically, we treat components in parallel, in series, and in $N$-star configurations (reduced to effective complete $N$-graphs). Our reduction procedure is directly related to the reduction to the trace process \cite{L} and, since the dynamics is in general not reversible, it is also closely connected to the theory of non-reversible electrical networks in \cite{B}.

\bigskip

\noindent {\em Keywords}: Large deviations, zero-range dynamics, resistor networks.

\smallskip

\noindent{\em AMS 2010 Subject Classification}:
60F10,  
60J27;  
Secondary
82C05,  

\end{abstract}

\maketitle
\thispagestyle{empty}

\section{Introduction}
An important challenge in non-equilibrium statistical mechanics is understanding the structure of current fluctuations through a system.  In common with other problems in statistical mechanics, the study of current fluctuations can be tackled using several different approaches, with a particular distinction between those having a microscopic perspective and those having a macroscopic perspective.

In this paper we study the large deviation rate functional for the current of a boundary-driven zero-range system of particles evolving on a finite lattice and in contact with external sources. Particles jump through the edges between the nodes of a finite graph, and are created or destroyed at boundary nodes.  There is a zero-range interaction among the particles themselves: the total jump rate of each particle is affected by the number of particles that are present at the same vertex,  with the restriction that the total jump rate from every node grows at least linearly with the number of particles in order to avoid the phenomenon of condensation~\cite{Gros,LMS,GSS,CC}. We compute the joint large deviation rate functional for the current of particles across each edge of the graph. This is the main result of the paper. Large deviations for the current of zero-range dynamics have been previously obtained, for example in \cite{BDGJL1, bd,HRS}, but our result may be the first one that explicitly gives a joint large deviation principle (LDP) for all the edges.
A special feature of the rate functional is that it does not depend on the specific form of the zero-range interaction; this is a feature already observed  in \cite{BDGJL1, bd,HRS} and it merits further comment later in the present paper.

We obtain the joint LDP by applying the contraction principle to a level 2.5 LDP \cite{BaCh,BFG} on the configuration space. We can solve exactly the corresponding variational problem to get a cost function on the graph which involves the currents across the edges and some potentials associated to the nodes. The form of the cost function is a sum of terms, one for each edge, like the energy of a resistor network.

The problem of large deviations for the current can be tackled, in some cases, also with a different  macroscopic approach: the so-called macroscopic fluctuation theory (MFT) \cite{MFT,BDGJL1}. One case where this approach can be used is the large deviations of symmetric zero-range models in the diffusive hydrodynamic scaling limit. In the large-time limit and in the absence of dynamical phase transitions, the corresponding continuous rate functional is a variational problem involving the current, an integral over the domain and a minimization over a potential. Our discrete rate functional, in the symmetric case, is a discrete version of the rate functional deduced from the macroscopic fluctuation theory. Indeed our microscopic rate functional converges to the continuous one in the limit when the mesh of the lattice is taken to zero. This is our second main result. Moreover, in the one-dimensional case, we find exact correspondences between additivity formulas and between minimizing potentials in the discrete and continuous settings.

With both microscopic and macroscopic approaches in the one-dimensional case, it is possible to obtain that the large deviation rate functional reduces to the rate functional of a graph with one single effective edge. This is obtained in the discrete setting by a reduction similar to the reduction of resistors in series.

In the paper \cite{Cap} the problem of fluctuations of the current across a cutset separating a system into two disjoint subsystems was studied.  For the simple exclusion process this problem was considered both macroscopically, for the rate functional of the macroscopic fluctuation theory, and also, heuristically, microscopically. The authors deduced that in both cases the fluctuations are equivalent to those of an effective one-dimensional system.

For the zero-range model, we can prove the above equivalence exactly, supplementing the one-dimensional reduction of the edges in series with a reduction procedure for edges in parallel, and obtaining also a star-triangle reduction that can be generalized to stars with $N$ vertices and complete $N$-graphs. Since our models are not necessarily reversible, we obtain a reduction procedure for the graph that is tantamount to the non-reversible electrical theory for non-reversible random walks proposed in \cite{B}. The discrete large deviation rate functional for the current here plays the role of a non-quadratic energy. This is our third main result. The classic reference for the relation between stochastic models and resistor networks is \cite{DS}. More recently there are several papers applying techniques from resistor networks in the framework of non-equilibrium systems, see for example \cite{A,C-electric,lin,Z}.  In particular, after completing the present work we discovered that our own network reduction procedure, with the same cost function, has also been obtained in \cite{PS}; however, the authors of \cite{PS} consider only reversible rates, with proofs and motivations which are rather different to ours.

We concentrate our exposition on the more computational aspects of the rate functional, treating some of the other issues without complete rigour in order to have a paper of reasonable length.

\smallskip

The work is organized as follows.

In Section 2 we introduce the class of models under consideration and discuss some basic issues and notation.

In Section 3 we recall the statement of level 2.5 large deviations and explain how this can be used to obtain the joint LDP for the current on the graph.

In Section 4 we demonstrate how to solve the variational problem of the contraction principle and thus arrive at the expression of the joint LDP.

In Section 5 we discuss the one-dimensional model in full generality and show the discrete additivity property.

In Section 6 we compute the scaling limit in the symmetric case and compare the results with the rate functional from macroscopic fluctuation theory.

In Section 7 we show the reduction rules of a general graph with the introduction of effective edges and discuss the relation with the trace process.

In Section 8 we compute the large deviations for the current through a cutset and show that this is equivalent to a graph with one single edge and effective rates.

In the appendices we collect some technical results.

\section{Boundary-driven zero-range process} \label{sec:1}

\subsection{Model definition}
Consider a finite directed graph $(\Omega_N\cup \partial \Omega_N, E_N)$ where the sets $\Omega_N$ and $\partial \Omega_N$ are disjoint. The nodes in $\partial \Omega_N$ are ``ghost'' sites associated to creation and annihilation of particles, physically they represent the external sources (or ``reservoirs'')  in contact with the system and we call them also ``boundary'' sites in the following. The nodes in $\Omega_N$ are the nodes of the system where particles evolve. The parameter $N$ can be thought of as fixed and related to the size of the system. In Section \ref{sec:scaling} we will consider the limit $N\to+\infty$ and require that $|\Omega_N|$ is diverging with $N$. 

We define
$\Lambda_N:=\left(\mathbb N \cup
\left\{0\right\}\right)^{\Omega_N}$. An element $\eta\in \Lambda_N$
is interpreted as a configuration of particles in $\Omega_N$ with $\eta(x)$
particles at site $x\in \Omega_N$. We denote by $\delta_x \in \Lambda_N$
the configuration of particles containing one single particle at $x$. Defining
$\eta^{x,y}:=\eta-\delta_x+\delta_y$ and $\eta^{x,\pm}:=\eta\pm \delta_x$, we have that $\eta^{x,y}$ and $\eta^{x,-}$
belong to $\Lambda_N$ if and only if $\eta(x)>0$; when this happens then $\eta^{x,y}$ is the
configuration of particles obtained from $\eta$ by letting one particle at $x$ jump to $y$, while $\eta^{x,\pm}$ are
configurations of particles obtained from $\eta$ by creating or destroying one particle at $x$. Let
$g:\mathbb N\cup\{0\}\to \mathbb R^+$ be a function such that $g(0)=0$.  Consider also a weight function $c: E_N\to \mathbb R^+$ and finally a function $\lambda:\partial \Omega_N\to \mathbb R^+$ that determines the rate at which a ghost site in $\partial \Omega_N$ produces particles to be injected into the system $\Omega_N$.
The boundary
driven zero-range process is a continuous-time Markov chain on
$\Lambda_N$ depending on the function $g$, the weights $c$, and the function $\lambda$,
and determined by the following jump rates
\begin{align}\label{gen-zr}
\mathcal L_Nf(\eta)=&\sum_{\{x,y\in \Omega_N, (x,y)\in  E_N\}}g(\eta(x))c(x,y)\left(f(\eta^{x,y})-f(\eta)\right) \nonumber \\
&+\sum_{\{x\in \Omega_N,y\in \partial \Omega_N,(x,y)\in E_N\}}g(\eta(x))c(x,y)\left(f(\eta^{x,-})-f(\eta)\right) \nonumber \\
&+\sum_{\{y\in \Omega_N,x\in \partial \Omega_N,(x,y)\in E_N\}}\lambda_xc(x,y)\left(f(\eta^{y,+})-f(\eta)\right)\,.
\end{align}

We now give the informal description of this Markov chain on the countable state space $\Lambda_N$.
Particles are injected into the discrete domain $\Omega_N$ in the following way. For any $(x,y)\in E_N$ such that $x\in \partial \Omega_N$ and $y\in \Omega_N$, a particle is created at $y$ with rate $\lambda_xc(x,y)$.
Inside the lattice each particle performs
a continuous-time random walk with rate given by the function $c$ multiplied by an intensity which is dependent on the number of particles
present at the site where the particle is located. This change of speed is the only interaction
among the particles and it is determined by the function $g$. Finally, with rate $g(\eta(x))c(x,y)$, when $x\in \Omega_N$, $y\in \partial \Omega_N$ and $(x,y)\in E_N$, one particle at $x$ jumps to the ghost site $y$ and is destroyed.

\smallskip

Significantly, although the zero-range model is a many-particle process, we will see in the following that useful information about the current fluctuations can be obtained by considering the continuous-time random walk on $\Omega_N\cup\partial \Omega_N$ with transition rates given by the parameters $\left(c(x,y)\right)_{(x,y)\in  E_N}$, absorbed at $\partial \Omega_N$ and having injection rates given by $(\lambda_x)_{x\in \partial \Omega_N}$.

\smallskip

Throughout this work we assume the irreducibility condition that a walker following the above random walk created at any $x\in \partial \Omega_N$ can reach any $y\in \Omega_N$ and be absorbed on any $z\in \partial \Omega_N$.

\subsection{Invariant measures}
We will assume that the function $g$ is superlinear, i.e.\ there exists a positive constant $A$ such that $g(k)\geq Ak$. This assumption on $g$ is a sufficient condition to get our large deviation results; significantly, it allows a direct comparison with the independent-particle case as outlined in Appendix \ref{appA}. We believe that our results might also hold under less restrictive assumptions on the growth of the function $g$ but we do not discuss that point further here. As pointed out by an anonymous reviewer, the results should also hold in the case of zero-range dynamics with site-dependent superlinear functions $g_x$. Again, we do not discuss the details of such an extension. 

\smallskip

Under the superlinear assumption we avoid possible condensation phenomena \cite{Gros,LMS} and guarantee that the model has a unique invariant measure.
The invariant measure is of product type \cite{BDGJL1, EH}
and can be written as
\begin{equation}\label{prod-zr}
\mu[\phi]:=\prod_x\frac{\phi_x^{\eta(x)}}{Z(\phi_x)g(\eta(x))!}\,,
\end{equation}
where $g(k)!:=g(k)g(k-1)\cdots g(1)$ and $Z(\phi):=\sum_{k=0}^{+\infty}\frac{\phi^k}{g(k)!}$ is a normalization factor.
The function $\phi:\Omega_N\cup \partial \Omega_N\to \mathbb R_+$ is determined by the parameters of the model and in particular has to satisfy the following discrete set of equations
\begin{equation}\label{inv-gen}
\left\{
\begin{array}{ll}
\phi_x=\lambda_x\,, & x\in \partial \Omega_N \\
\sum_{y: (x,y)\in  E_N}\phi_xc(x,y)-\sum_{y: (y,x)\in E_N}\phi_yc(y,x)=0\,, & x\in \Omega_N\,.
\end{array}
\right.
\end{equation}
The above form of the invariant measure can be checked by a direct computation and we refer to
\cite{BDGJLprimo,EH,LMS} for the details.
As a special case we have the one-dimensional lattice for which $\Omega_N:=\left\{1,2,\dots N\right\}$
and $\partial \Omega_N=\left\{0, N+1\right\}$; here the set of equations \eqref{inv-gen} becomes
\begin{equation}\label{inveqc}
\left\{
\begin{array}{ll}
\phi_0=\lambda_L\,, & \\
\phi_{N+1}=\lambda_R\,, & \\
\sum_{a=\pm 1}\Big[\phi_xc(x,x+a)-\phi_{x+a}c(x+a,x)\Big]=0\,, & x=1,\dots ,N\,
\end{array}
\right.
\end{equation}
with boundary rates $\lambda_L := \lambda_0$ and $\lambda_R := \lambda_{N+1}$.
For more details see~\cite{EH, Pulk} and note that this setting includes, as a special case, the homogeneous model with bulk driving, i.e.\ $c(x,x+1)$ constant but different to $c(x+1,x)$, which is treated in~\cite{LMS}.

\section{Level 2.5 large deviations and current fluctuations}

We discuss here briefly a widely-applicable large deviation principle for Markov processes which is commonly called level 2.5 large deviations. The reader is referred to \cite{BaCh,BFG} for details and historical development of the framework, as well as to \cite{BFG3,BFG2,MN,Monthus} for examples and applications. Our aim is to apply this large deviation principle to the general case of the boundary-driven zero-range process.

\subsection{Empirical measure and flow}\label{empmf}
Given $(\eta_t)_{t\in \mathbb R^+}$ a sample path of a continuous-time Markov chain (in our case, the zero-range process), i.e.\ an element  $D(\mathbb R^+, \Lambda_N)$ with jumps compatible with the transition graph, we define
the associated empirical flow and measure. For any time $T>0$, the empirical measure $\mu_T$ is a probability
measure on $\Lambda_N$. The empirical flow $Q_T$ is a flow on the transition graph $(\Lambda_N, \tilde E_N)$, i.e.\ it is a positive function defined on the directed edges $(\eta,\eta')\in \tilde E_N$ where $\eta'$ is obtained from $\eta$  via an elementary
transition of the chain. Both $\mu_T$ and $Q_T$ depend on the trajectory up to time $T$, that is $(\eta_t)_{\in \mathbb [0,T]}$, and are defined as follows.

The empirical measure quantifies the fraction of time spent on each element of the state space and is given by
\begin{equation}\label{empmes}
\mu_T(\eta):=\frac 1T\int_0^T\delta_{\eta,\eta_s}ds\,, \qquad \eta\in \Lambda_N\,,
\end{equation}
where $\delta_{\cdot,\cdot}$ is the Kronecker delta.
The empirical flow is related to the number of jumps across the edges of the transition graph and is defined as
\begin{equation}\label{empfl}
Q_T(\eta,\eta'):= \frac{\textrm{Number\ of\ jumps\ from} \ \eta \ \textrm{to}\ \eta'\ \textrm{in}\ [0,T]}{T}\,, \qquad (\eta,\eta') \in \tilde E_N\,.
\end{equation}

Since we also want to take into account the amount of mass that is flowing across edges $(x,y)\in E_N$ where one of $x$ and $y$ belongs to $\Omega_N$ while the other belongs to $\partial \Omega_N$, and since there may be multiple such boundary edges for a given internal node, we consider the transition graph $(\Lambda_N, \tilde E_N)$ as a multigraph. This is necessary only in the case when $(\eta,\eta')$ is such that $\eta'=\eta^{x,\pm}$ for some $x$. For example if $\eta'=\eta^{x,+}$ and there are several $y\in \partial \Omega_N$ such that $(y,x)\in E_N$, then we consider in $(\Lambda_N, \tilde E_N)$ as many directed edges from $\eta$ to $\eta^{x,+}$ as the number of $y\in \partial \Omega_N$. The process jumps across the edge that corresponds to a particular such $y$ with rate $\lambda_yc(y,x)$.
In this way we can recover from the empirical flow the amount of mass flowing through each edge in $E_N$, including the boundary ones.

For convenience we now extend our notation to write $\eta^{x,y}$ even when one of $x$ and $y$ does not belong to $\Omega_N$.  In particular, we define $\eta^{x,y}=\eta^{y,+}$ if $x\in \partial \Omega_N$ and $y\in \Omega_N$, and $\eta^{x,y}=\eta^{x,-}$ if $x\in \Omega_N$ and $y\in \partial\Omega_N$.
For example, with this generalized notation, we can denote naturally as $Q(\eta,\eta^{x,y})$ the number
of particles created at $y\in \Omega_N$ that have been generated at $x\in \partial \Omega_N$, starting from state $\eta$, and then moved through the edge $(x,y)$.

\subsection{Level 2.5 large deviations}
\label{sec2point5}

The level 2.5 large deviation principle is a joint large deviation result for the pair $(\mu_T,Q_T)$
in the limit $T\to +\infty$. We have therefore informally that
\begin{equation}\label{palloncini}
\mathbb P_{\eta}\left((\mu_T,Q_T)\sim (\mu,Q)\right)\simeq e^{-TI(\mu,Q)}
\end{equation}
where $I$ is the rate functional which we seek to describe. Note that \eqref{palloncini} introduces the shorthand notation used for large deviations throughout this work: the symbol $\sim$ denotes closeness in a suitable metric while $\simeq$ denotes the validity of upper and lower bounds, as in the classic definition \cite{DZ}.

We concentrate here on the computational power of this LDP and refer to~\cite{BFG, BFG2} for a complete mathematical description.  The precise formulation of the LDP and the application of the contraction principle is non-trivial for the zero-range process since the transition graph is infinite.  However, in the particular case of superlinear growth of the rates, a coupling with independent particles (as outlined in Appendix~\ref{appA}) shows that an LDP with the empirical flow as an element of $L^1_+(\tilde E_N)$ is applicable and the contraction principle too. Here $L^1_+(\tilde E_N)$ is the space of positive summable flows on the edges $\tilde E_N$ with the strong topology.  We refer to \cite{BFG2} for the precise formulation as well as discussion of the conditions for the validity of an LDP in $L^1_+$ space
  and of related topological issues.

In formula \eqref{palloncini} we denote by $\mathbb P_\eta$ the probability on path space when the initial condition of the process is $\eta$ although, in fact, for superlinear $g$ the rate functional does not depend on $\eta$. It does for other choices of $g$, when one has ``instantaneous condensation''~\cite{HRS} with the associated possibility of seeing flows which are not summable and/or not divergence-free; we will not consider such cases and our flows will always be summable and divergence-free, as explained below.

The rate functional $I=I(\mu,Q)$ in \eqref{palloncini} depends on the pair $(\mu, Q)$ where $\mu$ is a probability measure on $\Lambda_N$ and $Q$ is a flow on $\tilde E_N$ [Recall that a flow is a map from the set of directed edges to $\mathbb R^+$]. Obviously, the rate functional depends on $N$; for simplicity, however, we suppress such dependence.   Under our assumption on $g$, the flow $Q$ has to satisfy the following two conditions in order that $I(\mu,Q)<+\infty$.
\begin{itemize}
	\item \emph{Summability}: We have that $I(\mu,Q)=+\infty$
unless the flow $Q$ is summable; we call here a flow $Q$ on $(\Lambda_N,\tilde E_N)$  \emph{summable} if $\sum_{(\eta,\eta')\in \tilde E_N}Q(\eta,\eta')<+\infty$.

\item \emph{Divergence-free}: We have that $I(\mu,Q)=+\infty$ unless the flow $Q$ is divergence-free, i.e.\ satisfies
\begin{equation}
\nabla\cdot Q(\eta):=\sum_{\eta'\,:\, (\eta,\eta')\in \tilde E_N}Q(\eta,\eta')-\sum_{\eta'\,:\, (\eta',\eta)\in \tilde E_N}Q(\eta',\eta)=0\,,
\label{zero-div-zr}
\end{equation}
where, to avoid confusion, we emphasize that we have defined here the discrete divergence using the same symbol as for its continuous analog.
\end{itemize}

When the above two conditions are satisfied, then the rate functional has the following form. Let $\Psi$
be the real function of two variables given by
\begin{equation}\label{defpsi}
\Psi(a,b)=\left\{
\begin{array}{ll}
a\log\frac ab+b -a & a,b>0\\
b & a=0, b\geq 0 \\
+\infty & a>0, b=0\,, \\
+\infty & \min\{a,b\}<0\,.
\end{array}
\right.
\end{equation}
The rate functional is given by
\begin{align}\label{zr-rate-gen}
I(\mu, Q)&=\sum_{\eta\in \Lambda_N}\sum_{(\eta,\eta')\in \tilde E_N}\Psi\Big(Q(\eta,\eta'),\mu(\eta)r(\eta,\eta')\Big)\nonumber \\
&=\sum_{\eta\in \Lambda_N}\sum_{(x,y)\in E_N}\Psi\Big(Q(\eta,\eta^{x,y}),\mu(\eta)r(\eta,\eta^{x,y})\Big)\,,
\end{align}
where the rates of transition $r$ are defined by
\begin{equation}\label{defr}
r(\eta,\eta^{x,y}):=\left\{
\begin{array}{ll}
g(\eta(x))c(x,y) & \textrm{if}\  x,y\in \Omega_N\,, (x,y)\in E_N\,,\\
\lambda_xc(x,y) & \textrm{if}\  x\in \partial \Omega_N, y\in \Omega_N\,, (x,y)\in E_N\,,\\
g(\eta(x))c(x,y) & \textrm{if}\  y\in \partial \Omega_N, x\in \Omega_N\,, (x,y)\in E_N\,.
\end{array}
\right.
\end{equation}
We use here the notation introduced at the end of Section \ref{empmf} and recall that our transition graph can be a multigraph with respect to transitions involving creation and annihilation of particles.
Note also that we employ the convention that $Q(\eta,\eta'):=0$ when $\eta'\not\in \Lambda_N$ (i.e.\  when $\eta'$ is such that $\eta'(x)<0$ for some $x\in \Omega_N$); all the terms in formula \eqref{zr-rate-gen} in which there appears a $\eta'\not\in \Lambda_N$ are of the type $\Psi\Big(Q(\eta,\eta'),0\Big)$ so that they are identically zero since $\Psi(0,0)=0$.

As discussed in \cite{BFG,BFG2}, the summability condition on the rates is important in order that the rate functional has the form \eqref{zr-rate-gen}; see, for example, Theorem~2.10 of~\cite{BFG}.  Furthermore, as outlined later on, summability is important in order to apply the contraction principle.  Note also that when there exists a product invariant measure of the form \eqref{prod-zr}, it is possible to show by a direct computation that the typical flow $\mu[\phi](\eta)r(\eta,\eta')$ is summable.
\begin{remark}\label{fipoi}
An important observation is that $\Psi(\cdot,\alpha)$ is the large deviation rate functional for a Poisson process of parameter $\alpha$. In other words, if $N_T^\alpha$ is a Poisson process of parameter $\alpha$ then we have
\begin{equation}
\mathbb P\left(\frac{N_T^\alpha}{T}\sim q\right)\simeq e^{-T\Psi(q,\alpha)}\,.
\end{equation}
Considering two independent Poisson processes of parameters $\lambda_1$ and $\lambda_2$ then by a direct application of the contraction principle and an explicit minimization we have
\begin{equation}
\mathbb P\Big(\frac{N_T^{\lambda_1}-N_T^{\lambda_2}}{T}\sim j\Big)\simeq e^{-T\Gamma_{\lambda_1,\lambda_2}(j)}\,,
\end{equation}
where
\begin{align}\label{defgamma}
\Gamma_{\lambda_1,\lambda_2}(j)&:=\inf_{\{q^\pm:q^+-q^-=j\}}\left\{\Psi(q^+,\lambda_1)+\Psi(q^-,\lambda_2)\right\}\nonumber \\
&=j\log\left(\frac{j}{2\sqrt{\lambda_1\lambda_2}}+\sqrt{\frac{j^2}{4\lambda_1\lambda_2}+1}\right)-j\log\sqrt{\frac{\lambda_1}{\lambda_2}}-\sqrt{j^2 +4\lambda_1\lambda_2}+\lambda_1+\lambda_2\,.
\end{align}
This class of functions will play an important role in the following and we define them also in the case when one of the $\lambda_i$ is zero. In particular we set
\begin{equation}
\left\{
\begin{array}{l}
\Gamma_{\alpha,0}(q):=\Psi(q,\alpha)\,, \\
\Gamma_{0,\alpha}(q):=\Psi(-q,\alpha) \,.
\end{array}
\right.
\end{equation}
\end{remark}
The functions \eqref{defgamma} are, by construction, convex in $j$ for fixed $\lambda_i$.  By a direct, but long, computation it is not difficult to show that they are also jointly convex in $\lambda_1,\lambda_2\geq 0$ for any fixed $j$.

\begin{remark}\label{remring}
We point out that the rate functional~\eqref{zr-rate-gen} is not just valid for zero-range models of particles, with rates as in \eqref{defr}, but indeed holds generically, under suitable assumptions, for the empirical measure and flow on a continuous-time Markov chain $\eta(t)$ with transition graph $(\Lambda_N, \tilde E_N)$ and transition rate $r(\eta,\eta')$ for jumping from $\eta\in \Lambda_N$ to $\eta'\in \Lambda_N$, when $(\eta,\eta')\in \tilde E_N$.
\end{remark}

\subsection{Current fluctuations}\label{seccur}
We introduce here the main problem studied in this paper.
Consider again $(\eta_t)_{t\in \mathbb R^+}$ a sample path of the zero-range model. The level 2.5 LDP gives the rate functional for the flow on the configuration space but we are interested rather in the current of particles flowing through the physical space. In our case, the particles move in the discrete physical space which is the directed graph $\left(\Omega_N\cup\partial\Omega_N, E_N\right)$ but, for notational convenience, it is useful to introduce also the corresponding unoriented graph $\left(\Omega_N\cup\partial\Omega_N, \mathcal E_N\right)$ where $\{x,y\}\in \mathcal E_N$ when at least one of $(x,y)$ and $(y,x)$ belongs to $E_N$.

As a function of the trajectory we define a discrete vector field $j_T$ on our unoriented graph $\left(\Omega_N\cup\partial\Omega_N, \mathcal E_N\right)$
as follows. Recall that a discrete vector field is a real function such that $j_T(x,y)=-j_T(y,x)$; this must hold for any $\{x,y\}\in \mathcal E_N$. We construct the collection of trajectory-dependent numbers  $(j_T(x,y))_{\{x,y\}\in  \mathcal E_N}$ as
\begin{equation}\label{defempc}
j_T(x,y):= \frac{\textrm{Net\ number\ of\ particles\ jumped\ from} \ x \ \textrm{to}\ y\ \textrm{in}\ [0,T]}{T}\,, \qquad (x,y) \in  E_N\,,
\end{equation}
defining $j_T(x,y)$ by anti-symmetry in the case when $\{x,y\}\in \mathcal E_N$ but $(x,y)\not \in E_N$.
Once again we remark that we are defining the current also on edges $\{x,y\}$ with one of $x$ and $y$ belonging to $\Omega_N$ and the other to $\partial \Omega_N$; this can be done since our transition graph is a multigraph.

The main problem we are interested in is a large deviation result for $j_T$ in the limit when $T\to +\infty$. We need therefore to establish a result that can be informally stated as
\begin{equation}\label{palloncini2}
\mathbb P_\eta\left(j_T\sim j\right)\simeq e^{-T\mathcal R^N(j)}\,,
\end{equation}
where we write simply $j$ for the value of the discrete vector field $j=(j(x,y))_{\{x,y\}\in \mathcal E_N}$.

\smallskip

Several comments are in order here. In the above formula we again do not really need to specify the initial condition since, under our assumption on the function $g$,
the long-time behaviour will be independent of the initial condition. Our result here is obtained for fixed $N$ in the limit of large $T$; the rate functional depends on $N$ as indicated by the superscript. Later, in Section~\ref{secdiscon}, we will consider the convergence of $\mathcal R^N$ when $N\to +\infty$ and relate the limiting behavior with the computation of current fluctuations from the macroscopic fluctuation theory \cite{MFT}.

As we will prove, the rate functional $\mathcal R^N$ is equal to $+\infty$ unless the discrete vector field $j$ is divergence-free, i.e.\
\begin{equation}
\nabla\cdot j(x):=\sum_{y: \{x,y\}\in \mathcal E_N}j(x,y)=0\,, \qquad x\in \Omega_N\,. \label{divfree}
\end{equation}
In particular, in one dimension a divergence-free discrete vector field is necessarily constant and $\mathcal R^N$ thus reduces to a real function of a single variable which is the current across each edge.

\subsection{Contraction principle}\label{sec:contraction}

Our proposal is to deduce the large deviation principle \eqref{palloncini2} from \eqref{palloncini} using the fact that the empirical current on the real lattice $\left(\Omega_N\cup\partial \Omega_N, E_N\right)$ can be computed from the empirical flow on the transition graph $(\Lambda_N, \tilde E_N)$ and we can therefore apply the contraction principle.
Starting from a flow $Q$ on the transition graph $(\Lambda_N,\tilde E_N)$ we can define
a flow $q$ on the real-space lattice $\left(\Omega_N\cup \partial \Omega_N, E_N \right)$.
Using the notation introduced in Section~\ref{sec2point5}, the flow
$q$ is defined by
\begin{equation}\label{ancora-arpa-d}
q(x,y)=\sum_\eta Q(\eta,\eta^{x,y})\,.
\end{equation}

By an obvious direct computation, when we apply formula \eqref{ancora-arpa-d} to the empirical flow $Q_T$ then the corresponding flow $q_T$ is related to the empirical current by
\begin{equation}\label{def-corrente-fisica}
j_T(x,y)= q_T(x,y)- q_T(y,x), \qquad \{x,y\}\in \mathcal E_N\,,
\end{equation}
where we define $q_T(x,y):=0$ when $(x,y)\not \in E_N$.

The divergence of a flow $q$ on the lattice $\left(\Omega_N\cup \partial\Omega_N, E_N \right)$ is again defined analogously to \eqref{zero-div-zr} as
\begin{equation}
\nabla\cdot  q(x):= \sum_{y\,:\, (x,y)\in E_N} q(x,y)-\sum_{y\,:\, (y,x)\in E_N} q(y,x)\,.
\end{equation}
With these definitions we have the following result.
\begin{lemma}
Consider a summable flow $Q$ on $(\Lambda_N,\tilde E_N)$ and let $q$ be the flow defined by \eqref{ancora-arpa-d} on the physical lattice $\left(\Omega_N\cup \partial\Omega_N, E_N \right)$.  If $\nabla\cdot Q(\eta)=0$ for any $\eta\in \Lambda_N$, then also $\nabla\cdot q(x)=0$ for any $x\in \Omega_N$.
\end{lemma}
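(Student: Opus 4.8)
The plan is to compute the divergence $\nabla \cdot q(x)$ directly from its definition in terms of the flow $Q$ on the transition graph, and show that it equals a certain signed sum over all configurations of the local divergence $\nabla \cdot Q(\eta)$, each of which vanishes by hypothesis. First I would substitute the definition \eqref{ancora-arpa-d} into the definition of $\nabla \cdot q(x)$, obtaining
\begin{equation}
\nabla \cdot q(x) = \sum_{y\,:\,(x,y)\in E_N}\sum_\eta Q(\eta,\eta^{x,y}) - \sum_{y\,:\,(y,x)\in E_N}\sum_\eta Q(\eta,\eta^{y,x})\,. \nonumber
\end{equation}
The key idea is that each jump in the transition graph that moves a particle out of $x$ (respectively into $x$) contributes to $\nabla \cdot q(x)$, and I want to reorganize the sum so that it becomes a sum over $\eta$ of the full transition-graph divergence $\nabla \cdot Q(\eta)$ restricted to the relevant edges.

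The central step is a change of summation variable in the incoming terms. For a transition of the form $\eta \to \eta^{y,x}$ (a particle jumping from $y$ into $x$, using the generalized notation of Section~\ref{empmf}), I would reindex via $\xi := \eta^{y,x}$, so that the outgoing transition from $\xi$ across the reversed edge corresponds bijectively to the incoming transition into $\eta$. Concretely, the transitions moving mass \emph{out} of node $x$ are exactly those of the form $(\eta, \eta^{x,y})$, and those moving mass \emph{into} $x$ are those of the form $(\xi, \xi^{x,y'})$ read from the target side; matching these up shows that
\begin{equation}
\nabla \cdot q(x) = \sum_\eta \Big( \sum_{y\,:\,(x,y)\in E_N} Q(\eta,\eta^{x,y}) - \sum_{y\,:\,(y,x)\in E_N} Q(\eta^{y,x},\eta) \Big)\,, \nonumber
\end{equation}
where each inner bracket is precisely the net flow of particles out of site $x$ generated by transitions from or to the fixed configuration $\eta$. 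Summing the contributions at a single configuration $\eta$ over \emph{all} sites $x$ would recover the full $\nabla \cdot Q(\eta)$, but since we fix $x$ and sum over $\eta$, I need to check that the per-configuration bracket collapses correctly; the point is that each elementary transition of the chain changes the occupation at exactly one pair of sites, so the contribution of a given directed edge $(\eta,\eta')$ to $\nabla \cdot q(x)$ is $+Q(\eta,\eta')$ if the jump leaves $x$ and $-Q(\eta,\eta')$ if it enters $x$, and zero otherwise.

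The summability hypothesis is what licenses these manipulations: it guarantees absolute convergence of all the double sums, so that rearranging and reindexing the order of summation is justified (in particular Fubini/Tonelli applies since $Q \in L^1(\tilde E_N)$). After the reindexing, the bracketed expression indexed by $\eta$ is exactly the term $\nabla \cdot Q(\eta)$ contributed by edges incident to node $x$ in physical space, and invoking the hypothesis $\nabla \cdot Q = 0$ at every $\eta$ forces $\nabla \cdot q(x) = 0$. The main obstacle I anticipate is purely bookkeeping: carefully tracking the generalized notation $\eta^{x,y}$ in the boundary cases (where $x$ or $y$ lies in $\partial\Omega_N$ and the edge is a creation or annihilation transition, possibly with multigraph multiplicities) to ensure the reindexing bijection between incoming and outgoing transitions is exact and that the convention $Q(\eta,\eta')=0$ for $\eta'\notin\Lambda_N$ does not introduce spurious terms. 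Since $x$ is required to lie in $\Omega_N$, the divergence is taken only at internal nodes, which simplifies matters, but the edges summed over may still connect to $\partial\Omega_N$, so the boundary transitions must be handled consistently.
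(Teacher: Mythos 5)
Your reduction of $\nabla\cdot q(x)$ to a sum over configurations of a per-configuration bracket is fine as bookkeeping (modulo a notational slip: the source of an incoming transition that deposits a particle at $x$ coming from $y$, landing at $\eta$, is $\eta^{x,y}$, not $\eta^{y,x}$), and the appeal to summability/Fubini to justify the reindexing is well placed. The genuine gap is the final step. After the (corrected) reindexing, the bracket attached to $\eta$ is
\begin{equation*}
B_x(\eta)\;=\;\sum_{y\,:\,(x,y)\in E_N}Q(\eta,\eta^{x,y})\;-\;\sum_{y\,:\,(y,x)\in E_N}Q(\eta^{x,y},\eta)\,,
\end{equation*}
i.e.\ the outflow from $\eta$ along transitions that decrease the occupation at $x$, minus the inflow into $\eta$ along transitions that increase the occupation at $x$. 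This is only the \emph{part} of the divergence at $\eta$ coming from transitions incident to $x$ in physical space, and the hypothesis $\nabla\cdot Q(\eta)=0$ does \emph{not} force this part to vanish: the full divergence at $\eta$ also involves transitions that do not change the occupation at $x$, and these can compensate. Concretely, consider the elementary cycle $\eta_0\to\eta_1\to\eta_2\to\eta_3\to\eta_0$ in which a particle enters $x$ from the boundary ($\eta_1=\eta_0+\delta_x$), then a particle at some other site $y$ hops to $z$ ($\eta_2=\eta_1^{y,z}$ with $y,z\neq x$), then the particle at $x$ exits to the boundary ($\eta_3=\eta_2-\delta_x$), and finally the particle at $z$ hops back to $y$. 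The associated unit cycle flow is summable and divergence-free at every configuration, yet $B_x(\eta_1)=-1$ and $B_x(\eta_2)=+1$: the brackets do not vanish individually, only their sum over $\eta$ does. So the pointwise invocation of $\nabla\cdot Q=0$ proves nothing; a genuinely global argument is required, and that is exactly what is missing.

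There are two natural ways to supply the missing ingredient. The paper's proof invokes the cycle decomposition of Lemma 4.1 in \cite{BFG}: every summable divergence-free flow is a superposition $Q=\sum_C a_C Q_C$ of flows along elementary cycles, and for a single cycle flow the number of transitions decreasing the occupation at $x$ equals the number increasing it, because $\eta_i(x)$ returns to its initial value after going around the cycle; one then concludes by linearity. Alternatively, you can keep your direct computation and close it with a level-set (cut) argument: for each $n\geq 1$ sum the identity $\nabla\cdot Q(\eta)=0$ over the set $A_n:=\{\eta:\eta(x)\geq n\}$ (summability licenses this rearrangement); since every transition changes $\eta(x)$ by at most one, all edges internal to $A_n$ cancel, and one obtains that the total flow on transitions taking $\eta(x)$ from $n$ to $n-1$ equals the total flow on transitions taking it from $n-1$ to $n$. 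Summing over $n$ gives $\sum_\eta B_x(\eta)=0$, i.e.\ $\nabla\cdot q(x)=0$. Either patch turns your computation into a proof; as written, however, the key step is invalid.
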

\begin{proof}
By Lemma 4.1 in \cite{BFG} we have that any divergence-free summable flow can be decomposed as a superposition $Q=\sum_{C} a_C Q_C$  where the $Q_C$ are flows associated to elementary cycles and $a_C$ are constant coefficients. More precisely,
let $C:=\left(\eta_0,\eta_1,\dots , \eta_{k-1},\eta_k\right)$ be an elementary cycle
on the transition graph $(\Lambda_N,\tilde E_N)$. This means a collection of distinct vertices of $\Lambda_N$
such that $(\eta_i,\eta_{i+1})\in \tilde E_N$ and $\eta_0=\eta_k$.
The associated flow $Q_C$ is defined by
\begin{equation}\label{QC}
Q_C(\eta,\eta'):=\left\{
\begin{array}{ll}
1 & \textrm{if}\ (\eta,\eta')\in C\,\\
0 & \textrm{otherwise}\,.
\end{array}
\right.
\end{equation}
By linearity it is enough to prove that $\nabla\cdot  q_C(x)=0$, $x\in \Omega_N$, where the flow $q_C$ on $(\Omega_N\cup \partial\Omega_N, E_N)$ is
obtained from $Q_C$ by \eqref{ancora-arpa-d}. Indeed the outgoing flux $\sum_{y:(x,y)\in E_N} q_C(x,y)$ from $x\in \Omega_N$ coincides with the number of edges of the type $(\eta,\eta^{x,y})$ in the cycle $C$. Likewise
the incoming flux $\sum_{y:(y,x)\in E_N} q_C(y,x)$ in $x\in \Omega_N$ coincides with the number of edges of the type $(\eta^{x,y},\eta)$ in the cycle $C$. We have that the incoming flux and the outgoing flux at $x$ are equal since going around the cycle $C$ the number
of particles $\left(\eta_i(x)\right)_{i=0,\dots,k}$ present at site $x$ starts with the value $\eta_0(x)$ and returns to the same value $\eta_k(x)=\eta_0(x)$. This ends the proof.
\end{proof}

With the divergence-free property in hand, the contraction principle gives the following representation for the rate functional $\mathcal R^N$
\begin{equation}\label{contrpr}
\mathcal R^N(j)=\inf_{\left\{(\mu,Q): j=j[Q]\right\}} I(\mu,Q)\,,
\end{equation}
where $j[Q]$ is the discrete vector field such that
$j[Q](x,y)=q(x,y)-q(y,x)$, where $q$ is obtained from $Q$ by \eqref{ancora-arpa-d}.

\smallskip

We emphasize again that, since the transition graph is infinite, the validity of the contraction principle is not straightforward. This is one of the points where the superlinear growth of the function $g$ is relevant.  The discussion after formula \eqref{palloncini} gives a hint of the issues involved; in particular, with the strong $L^1_+(\tilde E_N)$ topology on flows, the map \eqref{ancora-arpa-d} is continuous.

\section{The variational argument}

Applying the contraction principle \eqref{contrpr} we obtain the following theorem which is one of our main results.
\begin{theorem}\label{Th1}
The empirical current \eqref{defempc} satisfies a large deviation principle \eqref{palloncini2} on $\mathbb R^{\mathcal E}$ when $T\to \infty$ with rate functional
\begin{equation}\label{laff}
\mathcal R^N(j)=\left\{
\begin{array}{ll}
\sum_{(x,y)\in \mathcal E_N}\Gamma_{\phi^*_xc(x,y),\phi^*_yc(y,x)}(j(x,y))\,,& \textrm{if}\ \div j(x)=0\  \forall x\in \Omega_N\\
+\infty & \textrm{if}\ \div j\neq 0\,,
\end{array}
\right.
\end{equation}
where the functions $\Gamma$ are as defined in Remark \ref{fipoi} and $\phi^*$ is the unique positive solution of
\begin{equation}\label{stazd}
2\phi^*_x\sum_{y:\{x,y\}\in \mathcal E_N}c(x,y)=\sum_{y:\{x,y\}\in \mathcal E_N}\sqrt{j(x,y)^2+4\phi^*_x\phi^*_yc(x,y)c(y,x)}\,, \qquad x\in \Omega_N\,.
\end{equation}
\end{theorem}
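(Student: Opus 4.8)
The plan is to prove the two inequalities in the contraction formula \eqref{contrpr} separately, after first disposing of the singular case. If $\div j(x)\neq 0$ for some $x\in\Omega_N$, note that the preceding Lemma shows every summable divergence-free $Q$ yields a divergence-free $q$, and since $\nabla\cdot q(x)=\sum_{y}[q(x,y)-q(y,x)]=\div j(x)$, no admissible pair $(\mu,Q)$ with $I(\mu,Q)<+\infty$ can project onto such a $j$; hence $\mathcal R^N(j)=+\infty$. From now on I assume $j$ divergence-free.

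For the lower bound I would exploit two structural facts: $\Psi$ is jointly convex and homogeneous of degree one, and $\mathbb E_{\mu[\phi]}[g(\eta(x))]=\phi_x$ for the product measure \eqref{prod-zr}. Homogeneity plus convexity give the superadditivity $\sum_i\Psi(a_i,b_i)\ge\Psi(\sum_i a_i,\sum_i b_i)$, which collapses the sum over $\eta$ in \eqref{zr-rate-gen}: setting $m_x:=\mathbb E_\mu[g(\eta(x))]$ for $x\in\Omega_N$ and $m_x:=\lambda_x$ for $x\in\partial\Omega_N$ (the injection rates being $\eta$-independent), one obtains $I(\mu,Q)\ge\sum_{(x,y)\in E_N}\Psi\big(q(x,y),c(x,y)m_x\big)$ with $q$ the projected flow \eqref{ancora-arpa-d}. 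Pairing the two orientations of each edge and minimizing over $q(x,y)-q(y,x)=j(x,y)$ through the definition \eqref{defgamma} of $\Gamma$ then yields $I(\mu,Q)\ge\sum_{\{x,y\}\in\mathcal E_N}\Gamma_{c(x,y)m_x,\,c(y,x)m_y}(j(x,y))$.

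The next step is to minimize this bound over all $\{m_x\}_{x\in\Omega_N}\ge 0$. Since $\Gamma$ is jointly convex in its parameters (as recorded after Remark~\ref{fipoi}) and the map $m\mapsto(c(x,y)m_x,c(y,x)m_y)$ is linear, this is a convex program. Using the envelope identity $\partial_{\lambda_1}\Gamma_{\lambda_1,\lambda_2}(j)=1-q^+/\lambda_1$, where $q^\pm$ are the optimizers characterized by $q^+-q^-=j$ and $q^+q^-=\lambda_1\lambda_2$, the first-order condition in $m_x$ becomes $2m_x\sum_{y}c(x,y)=\sum_{y}j(x,y)+\sum_{y}\sqrt{j(x,y)^2+4m_xm_yc(x,y)c(y,x)}$; the term $\sum_y j(x,y)$ vanishes precisely by divergence-freeness, leaving exactly \eqref{stazd}. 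Hence the relaxed minimum is attained at $m_x=\phi^*_x$.

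For the matching upper bound I would construct an explicit minimizer. Take $\mu=\mu[\phi^*]$ and let $Q^*$ be the stationary flow of an auxiliary zero-range process with the same $g$ but modified weights $\tilde c$ solving, edge by edge, $\phi^*_x\tilde c(x,y)-\phi^*_y\tilde c(y,x)=j(x,y)$ together with $\tilde c(x,y)\tilde c(y,x)=c(x,y)c(y,x)$; these admit positive solutions, and because $\sum_y j(x,y)=0$ the vector $\phi^*$ still solves \eqref{inv-gen} for $\tilde c$, so $\mu[\phi^*]$ is invariant and $Q^*(\eta,\eta^{x,y})=\mu[\phi^*](\eta)\,\tilde r(\eta,\eta^{x,y})$ is summable, divergence-free, and projects onto $j$. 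Substituting into \eqref{zr-rate-gen} with the original rates, homogeneity of $\Psi$ and $\mathbb E_{\mu[\phi^*]}[g(\eta(x))]=\phi^*_x$ reduce each $\eta$-sum to $\Psi(\phi^*_x\tilde c(x,y),\phi^*_x c(x,y))$; since the chosen flows obey $q^+q^-=\lambda_1\lambda_2$, the two orientations recombine into $\sum_{\{x,y\}}\Gamma_{\phi^*_xc(x,y),\phi^*_yc(y,x)}(j(x,y))$, matching the lower bound. The main obstacle I anticipate is the existence and uniqueness of the positive solution $\phi^*$ of \eqref{stazd}, which I would obtain from strict convexity and coercivity of the relaxed functional $m\mapsto\sum_{\{x,y\}}\Gamma_{c(x,y)m_x,c(y,x)m_y}(j(x,y))$: the blow-up of $\Gamma$ as a parameter tends to $0$ or to $+\infty$ makes it coercive on the positive orthant, with the irreducibility assumption ensuring each internal node is pinned through its connection to the boundary. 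Establishing this coercivity uniformly is the delicate point; the remaining computations are the routine algebra of $\Psi$ and $\Gamma$.
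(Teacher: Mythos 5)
Your proof is correct, and its central step is genuinely different from the paper's. The paper proceeds by \emph{restricting} the contraction \eqref{contrpr} a priori to pairs $(\mu[\phi],Q[\phi,F])$ — product measures and exponentially tilted zero-range flows — computing $I$ on this class to get \eqref{energy-fij}, and then deferring to Appendix~\ref{appB} the justification (a critical-point argument on the infinite-dimensional space of measures and flows, combined with convexity) that the global minimizer indeed lies in this class. You instead prove an honest lower bound valid for \emph{every} admissible $(\mu,Q)$: one-homogeneity plus joint convexity of $\Psi$ give the superadditivity $\sum_i\Psi(a_i,b_i)\ge\Psi(\sum_i a_i,\sum_i b_i)$, which collapses the configuration-space sum in \eqref{zr-rate-gen} onto the physical edges with parameters $m_x=\mathbb E_\mu[g(\eta(x))]$, reducing the contraction to a finite-dimensional convex program over $(m_x)_{x\in\Omega_N}$ whose first-order condition is exactly \eqref{stazd} (the $\sum_y j(x,y)$ term vanishing by divergence-freeness, as you note). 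Your matching upper bound is then the same object as the paper's optimal tilt: your conditions $\phi^*_x\tilde c(x,y)-\phi^*_y\tilde c(y,x)=j(x,y)$ and $\tilde c(x,y)\tilde c(y,x)=c(x,y)c(y,x)$ are precisely \eqref{uniqueF} with $\tilde c=ce^F$, $F$ antisymmetric, and the identity $q^+q^-=\lambda_1\lambda_2$ is what makes the two orientations recombine into $\Gamma$. What your route buys is a cleaner global-optimality argument: the superadditivity inequality replaces the infinite-dimensional criticality computation of Appendix~\ref{appB}, and it transparently explains why only the marginals $\mathbb E_\mu[g(\eta(x))]$ of $\mu$ matter (hence why the answer is $g$-independent). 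What the paper's route buys is the explicit exhibition of the minimizing dynamics as a perturbed zero-range process, which is reused later (e.g.\ in the network-reduction sections). Two shared soft spots, at the same level of rigour in both treatments: (i) existence and uniqueness of the positive solution $\phi^*$ of \eqref{stazd} — the paper asserts it from convexity of \eqref{energy-fij}, you flag coercivity of the relaxed functional as the delicate point (note that coercivity is not uniform: on edges where $j(x,y)=0$ the function $\Gamma$ does not blow up as $m_x\to 0$, so the pinning argument through the boundary via irreducibility really is needed); and (ii) the applicability of the contraction principle itself on the infinite transition graph, which both you and the paper take from the framework of \cite{BFG2} and Appendix~\ref{appA}. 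Neither affects the validity of your argument relative to the paper's own standard.
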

\begin{proof}
Given a function $\phi:\Omega_N\to \mathbb R^+$, recall that we denote by $\mu[\phi]$ the product measure \eqref{prod-zr}. Given a discrete vector field $F:E_N\to \mathbb R$, we define modified rates
\begin{equation}
r^F(\eta,\eta^{x,y}):=r(\eta,\eta^{x,y})e^{F(x,y)}\,,
\end{equation}
in terms of the original rates $r$ given in \eqref{defr}.

Consider a divergence-free discrete vector field $(j(x,y))_{(x,y)\in \mathcal E_N}$ and an arbitrary function
$(\phi_x)_{x\in \Omega_N \cup \partial \Omega_N}$ such that $\phi_x=\lambda_x$ for $x\in \partial \Omega_N$. Then there exists
a unique discrete vector field $(F(x,y))_{\{x,y\}\in \mathcal E_N}$ such that
\begin{equation}\label{uniqueF}
\phi_x c(x,y)e^{F(x,y)}-\phi_yc(y,x)e^{F(y,x)}=j(x,y)\,, \qquad \{x,y\}\in \mathcal E_N\,,
\end{equation}
and it is the logarithm of the unique positive solution of the second-degree equation
$X^2\phi_x c(x,y)-Xj(x,y)-\phi_yc(y,x)=0$, i.e.
\begin{equation}\label{Fsol}
F(x,y)=\log \frac{j(x,y)+\sqrt{j(x,y)^2+4\phi_x\phi_yc(x,y)c(y,x)}}{2\phi_x c(x,y)}\,, \qquad \{x,y\}\in \mathcal E_N\,.
\end{equation}
Note that we have $F(x,y)=-F(y,x)$ in \eqref{Fsol}, as required.

We restrict our variational computation to measures of the form $\mu[\phi]$ for arbitrary functions $\phi$ and to flows on the transition graph $(\Lambda_N,\tilde E_N)$ of the form
\begin{equation}\label{min-flu-zr}
Q[\phi,F](\eta,\eta'):=\mu[\phi](\eta)r^F(\eta,\eta')\,.
\end{equation}
The condition that the flow \eqref{min-flu-zr} is divergence-free corresponds to the condition that $\mu[\phi]$ is invariant for the zero-range dynamics with rates $r^F$. Since in \eqref{uniqueF} $j$ is divergence-free, taking the divergence of the left-hand side we obtain the stationary condition \eqref{inv-gen}. This means that any flow $Q[\phi,F]$ with $F$ defined by \eqref{Fsol} is divergence-free.

In Appendix \ref{appB} we will show that we can restrict to this class of measures and flows to obtain the  global minimizer in \eqref{contrpr}.

By the relation $\sum_\eta \mu[\phi](\eta)g(\eta(x))=\phi_x$, which straightforwardly follows from~\eqref{prod-zr}, we obtain that for a given divergence-free discrete vector field $j$ together with an arbitrary function $\phi$, there exists a unique discrete vector field $F$ that yields $j=j[Q[\phi,F]]$.  We consider $j$ as fixed and denote by $Q[\phi]$ the flow \eqref{min-flu-zr} for such $F$, i.e.\ the $F$ of~\eqref{Fsol}.

By a direct computation we have
\begin{equation}\label{energy-fij}
I(\mu[\phi],Q[\phi])=\sum_{(x,y)\in \mathcal{E}_N}\Gamma_{\phi_xc(x,y),\phi_yc(y,x)}(j(x,y))\,.
\end{equation}
We now have to minimize over the function $\phi$ and there are no constraints apart from the positivity. Condition \eqref{stazd} corresponds to setting the gradient equal to zero; uniqueness of the positive solution follows by the convexity in $\phi$ of \eqref{energy-fij}.
\end{proof}

An important fact which should be stressed here is that \eqref{laff} does not depend on the form of the function $g$ that determines the model and, in particular, the final result is the same as we would get for a model of independent particles. This phenomenon also arises in the macroscopic approach with the corresponding variational problem obtained by the macroscopic fluctuation theory \cite{MFT}. In the case that there are no boundary sources, this reduction to the independent case does not work and the rate function depends on $g$, see for example \cite{CGP}.

From a mathematical viewpoint, the fact that when there are no sources the rate functional loses its universal form and becomes $g$-dependent can be understood from the failure of the computations in Appendix~\ref{appB}. Indeed without sources, the total number of particles is fixed from the outset and, due to this, we no longer have simple formulas like \eqref{ultimaeq}. The minimization is more constrained (and thus harder) which results in different phenomenology. Intuitively, this comes from the fact that in the conserved-particle case, the system can organize into effective current-carrying structures, like traveling waves, which can persist for long times; in the boundary-driven case such configurations are destroyed after a finite time (since particles leave the system) and are therefore not relevant for long-time behavior.

\section{The one-dimensional case}

In this section, we single out the one-dimensional case where all computations can be done explicitly and reveal a simpler structure of the rate functional that we will extend to the general case afterwards. In this section $\Omega_N=\left\{1,\dots, N\right\}$, $\partial \Omega_N=\{0,N+1\}$ and $E_N$ connect nearest-neighbor sites; we denote as $\lambda_L$ and $\lambda_R$ the chemical potentials associated to the left and right boundary nodes.
In particular, we show here the form of the rate functional $\mathcal R^N$  using two methods: a heuristic coupling argument as in \cite{BDGJL1} and an exact variational computation by the contraction principle applied to
\eqref{laff}. Both approaches can be extended to general graphs. We point out that the expression of $\mathcal R^N$ in the one-dimensional case had also been previously obtained in \cite{HRS}
using an ansatz and combinatorial arguments.

In Sections~\ref{coupling} to~\ref{ident} below we restrict ourselves to the symmetric homogeneous case before considering inhomogeneous one-dimensional models in Section~\ref{zrgengen}.

\subsection{The coupling argument}\label{coupling}
The rate functional $\mathcal R^N$ for the symmetric homogeneous case, with unit weights $c$, has been explicitly
computed in \cite{BDGJL1} with a coupling argument. The exact statement in \cite{BDGJL1} is that the rate function of the LDP for
$j_T$ coincides with the rate functional of
\begin{equation}\label{equivP}
\frac 1T\left(N_T^{\frac{\lambda_L}{N+1}}-N_T^{\frac{\lambda_R}{N+1}}\right)\,,
\end{equation}
where $N_T^{\frac{\lambda_L}{N+1}}$ and $N_T^{\frac{\lambda_R}{N+1}}$ are
two independent Poisson processes with respective parameters
$\frac{\lambda_L}{N+1}$ and $\frac{\lambda_R}{N+1}$.
According to \eqref{defgamma} this is given by
\begin{equation}
\mathcal R^N(j) =\Gamma_{\frac{\lambda_L}{N+1},\frac{\lambda_R}{N+1}}(j)\,.
\label{ratecurrzero}
\end{equation}
We are now going to summarize the heuristic argument in \cite{BDGJL1}.

\smallskip

In the long-time limit the net flow of particles across each bond of the one-dimensional lattice can be computed as the number of particles created at the left boundary that leave the system through the right boundary minus the number of particles created at the right boundary that leave the system thorough the left boundary. This is because we are in the case when condensation does not happen and the probability to have an accumulation of particles on any site of the lattice is superexponentially small.

Particles are created at the left boundary with a Poisson process of parameter $\lambda_L$ and at the right boundary with a Poisson process of parameter $\lambda_R$. Disregarding holding times at each vertex, the trajectory of each particle is that of a simple random walk independent from the trajectories of all the other particles. Indeed the interaction among the particles modifies only the holding times and not the probability of jumping to the left or to the right.

The probability that one particle created at the left boundary will leave the system through the right boundary is therefore the probability that a simple random walk starting at $1\in \Omega_N\cup\{0,N+1\}$
will reach $N+1$ before $0$, and this is given by $\frac{1}{N+1}$. If we mark independently with probability $\frac{1}{N+1}$ each event of a Poisson process of parameter $\lambda_L$ we obtain a Poisson process of parameter $\frac{\lambda_L}{N+1}$. This is the approximate law of the number of particles created at the left boundary that exit from the right boundary, in the long-time limit. Likewise a Poisson process of parameter $\frac{\lambda_R}{N+1}$ is the approximate law of the number of particles created at the right boundary that exit from the left boundary, in the long time. This argument implies that the large deviations for $j_T$ are the same as those of the process \eqref{equivP}.

\subsection{The variational argument}\label{variational}

In the homogeneous one-dimensional case, again with all weights $c$ set to unity, we have that \eqref{energy-fij} is given by
\begin{align}\label{ilrate-zr}
I(\mu[\phi],Q[\phi])=\sum_{x=0}^N\Gamma_{\phi_x,\phi_{x+1}}(j)\,,
\end{align}
and the minimization conditions \eqref{stazd} become
\begin{equation}\label{staz-fi}
4\phi_x=\sqrt{j^2+4\phi_{x-1}\phi_{x}}+\sqrt{j^2+4\phi_x\phi_{x+1}}\,, \qquad x=1,\dots,N\,.
\end{equation}
Recall that in \eqref{staz-fi} we have $\phi_0=\lambda_L$ and $\phi_{N+1}=\lambda_R$. We now need to find the unique solution to the coupled nonlinear algebraic equations \eqref{staz-fi}.

\begin{lemma}\label{quadratico}
The unique positive solution of \eqref{staz-fi} with the boundary conditions $\phi_0=\lambda_L$ and $\phi_{N+1}=\lambda_R$ is given by
\begin{equation}\label{sol-zr}
\phi^*_x=A+Bx+Cx^2
\end{equation}
with
\begin{equation}\label{coeffa}
\left\{
\begin{array}{l}
A=\lambda_L\,, \\
B= \frac{-2\lambda_L+\sqrt{(N+1)^2j^2+4\lambda_L\lambda_R}}{(N+1)}\,,\\
C=\frac{\lambda_L+\lambda_R-\sqrt{(N+1)^2j^2+4\lambda_L\lambda_R}}{(N+1)^2}\,.
\end{array}
\right.
\end{equation}
\end{lemma}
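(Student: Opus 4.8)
The plan is to verify the stated quadratic directly rather than to solve the nonlinear system \eqref{staz-fi} from scratch. The engine of the proof is a single algebraic identity: for \emph{any} quadratic $\phi_x = A + Bx + Cx^2$ one has
\begin{equation}
\left(\phi_x + \phi_{x+1} - C\right)^2 = (B^2 - 4AC) + 4\phi_x\phi_{x+1}\,,
\end{equation}
which I would check by writing $(\phi_x+\phi_{x+1})^2 = (\phi_x-\phi_{x+1})^2 + 4\phi_x\phi_{x+1}$, using $\phi_{x+1}-\phi_x = B + C(2x+1)$, and simplifying. Consequently, as soon as $B^2 - 4AC = j^2$ and $\phi_x + \phi_{x+1} - C \ge 0$, the bond term linearizes to $\sqrt{j^2 + 4\phi_x\phi_{x+1}} = \phi_x + \phi_{x+1} - C$.

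With this linearization the verification of \eqref{staz-fi} is immediate: at each interior site the left-hand side becomes
\begin{equation}
\sqrt{j^2 + 4\phi_{x-1}\phi_x} + \sqrt{j^2 + 4\phi_x\phi_{x+1}} = (\phi_{x-1}+\phi_x - C) + (\phi_x + \phi_{x+1} - C) = \phi_{x-1} + 2\phi_x + \phi_{x+1} - 2C\,,
\end{equation}
and since the second difference of a quadratic is the constant $2C$, i.e.\ $\phi_{x-1} + \phi_{x+1} = 2\phi_x + 2C$, the right-hand side collapses to exactly $4\phi_x$. Thus every quadratic satisfying $B^2 - 4AC = j^2$ (with nonnegative bond quantities) solves \eqref{staz-fi}, independently of the particular coefficients.

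It then remains only to match the boundary data. The conditions $\phi_0 = \lambda_L$ and $\phi_{N+1} = \lambda_R$ are the two linear equations $A = \lambda_L$ and $A + B(N+1) + C(N+1)^2 = \lambda_R$, while $B^2 - 4AC = j^2$ is one quadratic equation; these three scalar relations pin down $A,B,C$. I would simply substitute the stated values, abbreviating $S := \sqrt{(N+1)^2 j^2 + 4\lambda_L\lambda_R}$, and confirm that $A + B(N+1) + C(N+1)^2 = \lambda_R$ and that $B^2 - 4AC = (S^2 - 4\lambda_L\lambda_R)/(N+1)^2 = j^2$. The sign of the root $S$ is selected precisely to keep the bond quantities $\phi_x + \phi_{x+1} - C$ nonnegative; the opposite choice would render some of them negative and so break the identity above and violate \eqref{staz-fi}.

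The remaining, and I expect most delicate, point is to confirm positivity of the exhibited solution and thereby that the chosen branch is the correct one. At the vertex $x^\ast = -B/(2C)$ the parabola takes the value $\phi(x^\ast) = (4AC - B^2)/(4C) = -j^2/(4C)$. Hence if $C < 0$ the quadratic is concave and lies above the chord joining the positive endpoints $(0,\lambda_L)$ and $(N+1,\lambda_R)$, so it is positive on $[0,N+1]$; if $C > 0$, a short estimate (for instance using that $S$ lies between $2\min\{\lambda_L,\lambda_R\}$ and $2\max\{\lambda_L,\lambda_R\}$ in this regime, or invoking the symmetry $x \mapsto N+1-x$, $\lambda_L \leftrightarrow \lambda_R$) shows $x^\ast \notin (0,N+1)$, whence $\phi$ is monotone and again positive. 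Positivity of $\phi$ in turn guarantees $\phi_x + \phi_{x+1} - C > 0$ on every bond, justifying the square-root identity; the discarded branch is exactly the one placing a zero of $\phi$ inside the domain. Uniqueness need not be reproved here: since \eqref{staz-fi} is the vanishing-gradient condition of the strictly convex functional \eqref{ilrate-zr}, the positive critical point is unique by the convexity already noted in the proof of Theorem \ref{Th1}, so the quadratic we have exhibited is \emph{the} positive solution.
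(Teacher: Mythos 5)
Your existence half is correct and is essentially the paper's own argument in Appendix \ref{apquadratico} run in reverse: the paper squares \eqref{staz-fi} twice to get \eqref{staz-qua}, checks that any quadratic with $B^2-4AC=j^2$ solves the squared equation, fits $A,B,C$ to the boundary data, and rejects the other root exactly as you do (positive $C$, interior vertex, negative minimum). Your identity $(\phi_x+\phi_{x+1}-C)^2=(B^2-4AC)+4\phi_x\phi_{x+1}$ is a tidy way to pass back from the squared equation to \eqref{staz-fi} itself, a step the paper also needs. But one link in your chain is asserted, not proved: that positivity of $\phi$ on $[0,N+1]$ gives $\phi_x+\phi_{x+1}-C>0$ on every bond. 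Positivity alone does not imply this; you also need that $B^2-4AC=j^2\ge 0$ makes the zeros of $\phi$ real and (by positivity) located outside $[0,N+1]$, together with the observation that $H(t):=\phi(t)+\phi(t+1)-C$ can vanish only where $\phi(t)\phi(t+1)=-j^2/4\le 0$, i.e.\ only within distance $1$ of those zeros; for $C>0$ this confines both roots of the convex quadratic $H$ to one side of $[0,N]$, whence $H>0$ at all bonds. This is a few lines, but without them the square-root linearization --- the engine of your proof --- is not justified.

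The genuine gap is uniqueness, which is part of the statement. You dismiss it by declaring \eqref{ilrate-zr} strictly convex, ``already noted in the proof of Theorem \ref{Th1}''. The paper notes only convexity there, and strictness is precisely the non-trivial point: each summand $\Gamma_{\phi_x,\phi_{x+1}}(j)$ has an identically singular Hessian in $(\phi_x,\phi_{x+1})$ --- for example $\Gamma_{\lambda_1,\lambda_2}(0)=(\sqrt{\lambda_1}-\sqrt{\lambda_2})^2$ is linear along rays through the origin --- so no single term has a unique critical point, and strict convexity of the sum in $(\phi_1,\dots,\phi_N)$ with pinned endpoints itself requires an argument (it is true, because a null direction of the full Hessian would have to lie in the kernel of every bond Hessian, and those kernels have both components nonzero, so $v_0=0$ propagates to $v\equiv 0$; but this argument appears nowhere). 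The paper's proof avoids convexity entirely with a shooting argument: \eqref{staz-fi} can be solved forward as $\phi_{x+1}=\bigl[\bigl(4\phi_x-\sqrt{j^2+4\phi_{x-1}\phi_x}\bigr)^2-j^2\bigr]/(4\phi_x)$, so a solution with $\phi_0=\lambda_L$ is determined by $\phi_1$; each initial pair $(\lambda_L,\phi_1^*)$ is matched by a quadratic with $B^2-4AC=j^2$ (the paper's \eqref{coeff-un}), so every solution is such a quadratic, and the right boundary condition plus positivity then leaves exactly the stated one. To complete your proof you must either import this recursion argument or actually prove the strict convexity; as written, uniqueness is assumed rather than established.
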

The proof of this lemma is in Appendix \ref{apquadratico}.

\smallskip

This is then the unique critical point of \eqref{ilrate-zr} and
it corresponds to its global minimum. Note that, as in the corresponding macroscopic problem in the framework of hydrodynamic limits \cite{BDGJL1}, the minimizer \eqref{coeffa} depends only on the square of the current. This and other relations will be discussed later on.

We obtained therefore that the rate functional $\mathcal R^N(j)$ is given by \eqref{ilrate-zr} evaluated at the critical point $\{\phi^*_x\}_{x=1}^N$ which is the form \eqref{sol-zr} with coefficients determined by \eqref{coeffa}. This representation of the rate functional should be compared with the representation \eqref{ratecurrzero} obtained by the coupling approach. The identification of the two formulas is not trivial but results from the following computations.

\subsection{The identification}\label{ident}
It will be useful to recall the definition of the inverse hyperbolic function
$$
\asinh(x)=\log\left(x+\sqrt{x^2+1}\right)
$$
and its additive property
\begin{equation}\label{add-h}
\asinh (u) \pm \asinh (v)=\asinh\left(u\sqrt{1+v^2}\pm v\sqrt{1+u^2}\right)\,.
\end{equation}
Using this we can write \eqref{defgamma} as
\begin{equation}\label{basic}
\Gamma_{\lambda_1,\lambda_2}(j):=j\asinh \left(\frac{j}{2\sqrt{\lambda_1\lambda_2}}\right)-j\log\sqrt{\frac{\lambda_1}{\lambda_2}}-\sqrt{j^2 +4\lambda_1\lambda_2}+\lambda_1+\lambda_2\,.
\end{equation}

The identification between the two different expressions of the rate functional follows by the following additivity formula.
\begin{lemma}\label{lemma-add}\textbf{[Discrete additivity lemma]}
For any $j$ and $a+b>0$ we have
\begin{equation}\label{unitstep}
\inf_{\phi\geq 0}\left\{\Gamma_{\lambda,a\phi}(j)+\Gamma_{b\phi,\mu}(j)\right\}=\Gamma_{\frac{b\lambda}{a+b},\frac{a\mu}{a+b}}(j)\,.
\end{equation}
\end{lemma}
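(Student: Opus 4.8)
The plan is to turn the claim into a one-dimensional convex minimization over $\phi$ and then to exploit the additive property \eqref{add-h} of $\asinh$ --- which is exactly why the rate functions were recast in the form \eqref{basic}. First I would write both summands on the left of \eqref{unitstep} using \eqref{basic}. The key preliminary observation is that the two ``drift'' contributions $-\tfrac{j}{2}\log\frac{\lambda}{a\phi}$ and $-\tfrac{j}{2}\log\frac{b\phi}{\mu}$ add up to $-\tfrac{j}{2}\log\frac{b\lambda}{a\mu}$, in which $\phi$ cancels; this already coincides with the drift term $-\tfrac{j}{2}\log\frac{\tilde\lambda_1}{\tilde\lambda_2}$ of the right-hand side, where $\tilde\lambda_1=\frac{b\lambda}{a+b}$ and $\tilde\lambda_2=\frac{a\mu}{a+b}$. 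Hence it remains only to minimize
\[
g(\phi)=j\,\asinh\!\big(\tfrac{j}{2\sqrt{a\lambda\phi}}\big)+j\,\asinh\!\big(\tfrac{j}{2\sqrt{b\mu\phi}}\big)-\sqrt{j^2+4a\lambda\phi}-\sqrt{j^2+4b\mu\phi}+(a+b)\phi+\lambda+\mu
\]
and to check that its minimal value reproduces the remaining part of $\Gamma_{\tilde\lambda_1,\tilde\lambda_2}(j)$.

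Next I would compute $g'(\phi)$. Writing $P:=\sqrt{j^2+4a\lambda\phi}$ and $Q:=\sqrt{j^2+4b\mu\phi}$, the derivative of each $\asinh$ term combines with that of the corresponding square root to give simply $-\frac{P}{2\phi}$ and $-\frac{Q}{2\phi}$, so that $g'(\phi)=(a+b)-\frac{P+Q}{2\phi}$. The stationarity condition is therefore $2(a+b)\phi^*=P+Q$, which is precisely the two-node instance of \eqref{stazd} (and of \eqref{staz-fi} when $a=b=1$). To solve it without brute force I would use $P^2-Q^2=4\phi^*(a\lambda-b\mu)=(P+Q)(P-Q)$, which together with $P+Q=2(a+b)\phi^*$ gives $P-Q=\frac{2(a\lambda-b\mu)}{a+b}$; substituting the resulting linear expression for $P$ into $P^2=j^2+4a\lambda\phi^*$ yields a single quadratic with unique positive root
\[
\phi^*=\frac{a\lambda+b\mu+R}{(a+b)^2},\qquad R:=\sqrt{(a+b)^2 j^2+4ab\lambda\mu}.
\]

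The heart of the argument is the $\asinh$ part, and here \eqref{add-h} does the work. With $u=\frac{j}{2\sqrt{a\lambda\phi^*}}$ and $v=\frac{j}{2\sqrt{b\mu\phi^*}}$ one has $\sqrt{1+u^2}=\frac{P}{2\sqrt{a\lambda\phi^*}}$ and $\sqrt{1+v^2}=\frac{Q}{2\sqrt{b\mu\phi^*}}$, so that $u\sqrt{1+v^2}+v\sqrt{1+u^2}=\frac{j(P+Q)}{4\phi^*\sqrt{ab\lambda\mu}}$; inserting the stationarity identity $P+Q=2(a+b)\phi^*$ collapses this to $\frac{j(a+b)}{2\sqrt{ab\lambda\mu}}=\frac{j}{2\sqrt{\tilde\lambda_1\tilde\lambda_2}}$. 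Thus $j\,\asinh(u)+j\,\asinh(v)$ equals exactly the $\asinh$ term of $\Gamma_{\tilde\lambda_1,\tilde\lambda_2}(j)$, with no residual dependence on $\phi^*$. It then remains to match the algebraic remainder: using $P+Q=2(a+b)\phi^*$, the non-$\asinh$ part of $g(\phi^*)$ reduces to $-(a+b)\phi^*+\lambda+\mu$, and inserting the value of $\phi^*$ above turns this into $\frac{b\lambda+a\mu}{a+b}-\frac{R}{a+b}=\tilde\lambda_1+\tilde\lambda_2-\sqrt{j^2+4\tilde\lambda_1\tilde\lambda_2}$, which is precisely the remaining part of the right-hand side of \eqref{unitstep}.

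Finally I would justify that $\phi^*$ is the global minimizer. Since $\Gamma$ is jointly convex in its two parameters (as noted just after \eqref{defgamma}) and $(\lambda,a\phi)$, $(b\phi,\mu)$ are affine in $\phi$, the map $\phi\mapsto\Gamma_{\lambda,a\phi}(j)+\Gamma_{b\phi,\mu}(j)$ is convex, so its unique positive critical point is the minimizer; one checks the objective tends to $+\infty$ both as $\phi\to0^+$ (for $j\neq0$, through the $\asinh$ blow-up) and as $\phi\to+\infty$ (through the linear term $(a+b)\phi$), so the infimum is attained in the interior. I expect the only genuine bookkeeping obstacle to be the reduction of $g'$ to $(a+b)-\frac{P+Q}{2\phi}$ and the final matching of the non-$\asinh$ remainder; the conceptually decisive --- and pleasantly clean --- point is that the stationarity condition $P+Q=2(a+b)\phi^*$ is exactly the input that \eqref{add-h} requires, so the $\asinh$ terms combine without ever needing the explicit value of $\phi^*$. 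The degenerate cases $\lambda=0$ or $\mu=0$, where \eqref{basic} does not apply directly, I would handle by taking limits using the conventions for $\Gamma$ with a vanishing argument fixed in Remark \ref{fipoi}.
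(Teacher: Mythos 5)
Your proof is correct and follows essentially the same route as the paper: the same decomposition of $\Gamma$ into $\asinh$, logarithmic, and algebraic parts, the same stationarity condition $2(a+b)\phi^*=\sqrt{j^2+4a\lambda\phi^*}+\sqrt{j^2+4b\mu\phi^*}$, and the same use of the $\asinh$ addition formula \eqref{add-h}. In fact your write-up is more complete than the paper's, which merely asserts that the remaining algebraic terms match ``with some tricky computations'': you solve the stationarity equation explicitly, reduce the remainder to $-(a+b)\phi^*+\lambda+\mu$, verify the match, and supply the convexity argument guaranteeing that the critical point is the global minimizer.
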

\begin{proof}
We consider separately the first term, the second term and the remaining terms in \eqref{basic}.
The critical point associated to the minimization problem in \eqref{unitstep} is characterized by a generalization of condition \eqref{staz-fi} given by
\begin{equation}\label{staz-fi-gen}
2(a+b)\phi=\sqrt{j^2+4\lambda a \phi}+\sqrt{j^2+4b\phi\mu}\,.
\end{equation}
For the critical value of $\phi$, characterized by \eqref{staz-fi-gen},  we have using the additivity formula \eqref{add-h} that
\begin{equation}\label{add-h-special}
j\asinh \left(\frac{j}{2\sqrt{\lambda a\phi}}\right) + j\asinh \left(\frac{j}{2\sqrt{b\phi\mu}}\right)=j\asinh\left(\frac{j}{2\sqrt{\frac{b\lambda a\mu}{(a+b)^2}}}\right).
\end{equation}
Hence the first term in~\eqref{basic} matches in the equality. For the second term we have
$$
-j\log\sqrt{\frac{\lambda}{a\phi}}-j\log\sqrt{\frac{b\phi}{\mu}}=-j\log\sqrt{\frac{b\lambda}{a\mu}}
$$
which is also compatible with the statement. Finally, for the remaining terms we have the identity
\begin{multline}
-\sqrt{j^2 +4\lambda a\phi}+\lambda+a\phi-\sqrt{j^2 +4\mu b\phi}+\mu+b\phi  \\
=-\sqrt{j^2 +4\frac{ba\lambda\mu}{(a+b)^2}}+\frac{b\lambda}{a+b}+\frac{a\mu}{a+b}\,,
\end{multline}
which it is possible to show under the critical condition \eqref{staz-fi-gen} with some tricky computations.

\end{proof}
Using the above lemma we deduce the following main result for the homogeneous one-dimensional zero-range process.
\begin{theorem}
For symmetric homogeneous one-dimensional nearest-neighbour zero-range dynamics, we have that
\begin{equation}\label{quallali}
\mathcal R^N(j)=\inf_{\left\{\phi: \phi_0=\lambda_L,\phi_{N+1}=\lambda_R\right\}}\sum_{x=0}^N\Gamma_{\phi_x,\phi_{x+1}}(j)=\Gamma_{\frac{\lambda_L}{N+1},\frac{\lambda_R}{N+1}}(j)\,,
\end{equation}
where the unique minimizer in \eqref{quallali} is characterized in Lemma \ref{quadratico}.
\end{theorem}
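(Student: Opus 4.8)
The plan is to establish the chain of equalities in \eqref{quallali} by combining the variational representation already derived in Theorem \ref{Th1} (specialized to the homogeneous symmetric one-dimensional case as in \eqref{ilrate-zr}) with the discrete additivity lemma, Lemma \ref{lemma-add}. The first equality is essentially a restatement: by Theorem \ref{Th1} and the specialization \eqref{ilrate-zr}, the rate functional $\mathcal R^N(j)$ equals the infimum over positive $\phi$ with boundary data $\phi_0=\lambda_L$, $\phi_{N+1}=\lambda_R$ of $\sum_{x=0}^N \Gamma_{\phi_x,\phi_{x+1}}(j)$, since in one dimension divergence-freeness forces the current to be the constant $j$ across every bond. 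So the real content is the second equality, the closed-form evaluation of the minimum.

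My approach to the second equality is an induction on $N$ using Lemma \ref{lemma-add} to collapse the chain one interior node at a time. The base case $N=0$ is immediate, since then the sum is the single term $\Gamma_{\lambda_L,\lambda_R}(j)$, matching the right-hand side with $N+1=1$. For the inductive step I would minimize over the innermost variable and repeatedly apply \eqref{unitstep} with an appropriate choice of the constants $a$ and $b$. Concretely, the telescoping works cleanly if one introduces the rescaled variables suggested by the form of Lemma \ref{lemma-add}: writing each $\Gamma_{\phi_x,\phi_{x+1}}$ and fusing adjacent edges, the lemma predicts that minimizing over a single interior $\phi_x$ merges the two adjacent effective rates according to the harmonic-like combination $\Gamma_{\lambda,a\phi}+\Gamma_{b\phi,\mu}\mapsto \Gamma_{\frac{b\lambda}{a+b},\frac{a\mu}{a+b}}$. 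Tracking the accumulated denominators through the successive merges should produce exactly the factor $N+1$ in both arguments, yielding $\Gamma_{\frac{\lambda_L}{N+1},\frac{\lambda_R}{N+1}}(j)$ after the last node is eliminated. I would set up the induction hypothesis so that after eliminating $k$ interior nodes the partial chain reduces to a single $\Gamma$ with the two arguments carrying the correct cumulative weights, so that each application of Lemma \ref{lemma-add} advances $k$ by one with the bookkeeping matching.

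The main obstacle I anticipate is precisely this bookkeeping: choosing the constants $a,b$ at each step so that the repeated fusions compose correctly, and verifying that the joint minimization over all interior $\phi_x$ can indeed be performed sequentially (one variable at a time) rather than only globally. Sequential minimization is legitimate here because the objective \eqref{ilrate-zr} is jointly convex in $\phi$ — a fact already noted in the excerpt, where the convexity of $\Gamma_{\lambda_1,\lambda_2}(j)$ in $(\lambda_1,\lambda_2)$ is asserted — so the infimum can be taken in any order and the critical point is unique. A cleaner alternative that sidesteps the induction entirely is to verify directly that the explicit minimizer $\phi^*_x=A+Bx+Cx^2$ of Lemma \ref{quadratico} satisfies the stationarity equations \eqref{staz-fi}, substitute it back into \eqref{ilrate-zr}, and check that the resulting expression coincides term-by-term with the closed form \eqref{basic} for $\Gamma_{\frac{\lambda_L}{N+1},\frac{\lambda_R}{N+1}}(j)$; the three groups of terms (the $\asinh$ term, the logarithmic term, and the algebraic remainder) can be matched separately, exactly as in the proof of Lemma \ref{lemma-add}. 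I would present the inductive argument via Lemma \ref{lemma-add} as the primary route, since it makes transparent the ``resistors in series'' structure that the paper emphasizes, and relegate the direct substitution check to a remark or appendix as a consistency verification.
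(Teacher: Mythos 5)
Your proposal is correct and follows essentially the same route as the paper: the first equality is the specialization of Theorem \ref{Th1} to one dimension, and the second is proved by induction on $N$ using Lemma \ref{lemma-add} to merge one edge at a time (the paper anchors at $N=1$ and merges the last edge with $a=\tfrac{1}{N+1}$, $b=1$; your left-to-right sweep with base case $N=0$ is the same telescoping). One minor remark: your appeal to convexity to justify sequential minimization is unnecessary, since an iterated infimum always equals the joint infimum—convexity is only needed for uniqueness of the minimizer, which the paper delegates to Lemma \ref{quadratico}.
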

\begin{proof}
The proof is obtained by induction on $N$. The case with $N=1$ is given by a special case of Lemma
\ref{lemma-add}. Let us now suppose that the coincidence of the two formulas is true for $N$ and show that it is then true also for $N+1$.   By the inductive hypothesis we have
\begin{align}
\inf_{\left\{\phi : \phi_0=\lambda_L, \phi_{N+2}=\lambda_R\right\}}&\left\{\sum_{x=0}^{N+1}\Gamma_{\phi_x.\phi_{x+1}}(j)\right\} \notag\\
&=\inf_{\phi_{N+1}}\left\{\inf_{\left\{\phi_0=\lambda_L,\phi_1,\dots, \phi_{N}\right\}}\left\{\sum_{x=0}^{N}\Gamma_{\phi_x.\phi_{x+1}}(j)\right\}+\Gamma_{\phi_{N+1},\lambda_R}(j)\right\}\\
&=\inf_{\phi_{N+1}} \left\{\Gamma_{\frac{\lambda_L}{N+1},\frac{\phi_{N+1}}{N+1}}(j)+\Gamma_{\phi_{N+1},\lambda_R}(j)\right\}\\
&=\Gamma_{\frac{\lambda_L}{N+2},\frac{\lambda_R}{N+2}}(j)\,,
\end{align}
where in the last step we used Lemma \ref{lemma-add} with $a=\frac{1}{N+1}$
and $b=1$ for which we get $\frac{a}{a+b}=\frac{1}{N+2}$ and $\frac{b}{a+b}=\frac{N+1}{N+2}$.
\end{proof}

\subsection{One-dimensional inhomogeneous zero-range model} \label{zrgengen}

We now extend the analysis to spatially inhomogeneous one-dimensional zero-range dynamics with left and right hopping rates from site $x$ given by $c(x,x-1)$ and $c(x,x+1)$ with the condition that $\inf_x c(x,x\pm 1)>0$ so that particles can cross the system. The computations are similar to the ones of Section \ref{variational}; we do not discuss all the details here  since the inhomogeneous model can, in fact, be included in the general treatment of Section~\ref{gensec}.  For the present case we have
\begin{equation}\label{rateinomogeneo}
\mathcal R^N(j)=\inf_{\left\{\phi: \phi_0=\lambda_L, \phi_{N+1}=\lambda_R\right\}}\sum_{x=0}^N \Gamma_{\phi_xc(x,x+1),\phi_{x+1}c(x+1,x)}(j)\,.
\end{equation}

To elucidate the structure of~\eqref{rateinomogeneo} and understand how it can be written in closed form, we first introduce some more notation.

\smallskip

Consider the continuous-time random walk $X(t)$ on the lattice $\Omega_N\cup \partial \Omega_N$ with jumps from $x$ to nearest-neighbor $y$ at rate $c(x,y)$. Given $S\subseteq \Omega_N\cup \partial \Omega_N$ we define
\begin{equation}
\tau_S:=\inf \left\{t : X(t)\in S\right\}\,,
\end{equation}
and
\begin{equation}
\tau^+_S:=\inf \left\{t >\tau^1\,, X(t)\in S\right\}\,,
\end{equation}
where $\tau^1$ denotes the time of the first jump out of the starting state.
Given $y \in \Omega_N\cup \partial \Omega_N$ and disjoint $A,B\subseteq \Omega_N\cup \partial \Omega_N$, we write
$p_y(A,B)$ for the probability that a random walker starting at $y$ at time zero reaches $A$ before $B$, i.e.\
\begin{equation}
p_y(A,B):=\mathbb P_y\left(X(\tau_{A\cup B})\in A\right)\,,
\end{equation}
and consequently we have $p_y(B,A)=1-p_y(A,B)$ under irreducibility assumptions.

The following result will be useful.
\begin{lemma}
	Consider the random walk with rates $(c(x,y))_{(x,y)\in  E_N}$ on $\Omega_{N+1}\cup\partial \Omega_{N+1}=\{1,\dots, N+1\}\cup\{0,N+2\}$. For any $k+1 \in \Omega_{N+1}$ we have
	\begin{align}\label{useful}
	p_1(N+2,0)&=\frac{p_{1}(k+1,0)c(k+1,k+2)p_{k+2}(N+2,k+1)}{c(k+1,k+2)p_{k+2}(N+2,k+1)+c(k+1,k)p_k(0,k+1)}\,, \\
	p_{N+1}(0,N+2)&=\frac{ p_{N+1}(k+1,N+2)c(k+1,k)p_k(0,k+1)}{c(k+1,k+2)p_{k+2}(N+2,k+1)+c(k+1,k)p_k(0,k+1)}\,.
	\end{align}
\end{lemma}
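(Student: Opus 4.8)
The plan is to exploit the one-dimensional nearest-neighbour structure: on the line $\{0,1,\dots,N+2\}$, any trajectory starting at $1$ must pass through the intermediate vertex $k+1$ before it can reach the right boundary $N+2$, while if it hits the left boundary $0$ first it is absorbed and fails. First I would apply the strong Markov property at the hitting time $\tau_{\{k+1,0\}}$ to factorize
\begin{equation}
p_1(N+2,0)=p_1(k+1,0)\,p_{k+1}(N+2,0),
\end{equation}
since the event of reaching $N+2$ before $0$ from $1$ is contained in the event of reaching $k+1$ before $0$ (probability $p_1(k+1,0)$), after which the walk restarts afresh at $k+1$ and must reach $N+2$ before $0$ (probability $p_{k+1}(N+2,0)$).

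The second step is to compute the escape probability $h:=p_{k+1}(N+2,0)$ by a first-jump renewal analysis at $k+1$. Conditioning on the first jump out of $k+1$ — to $k$ with probability $c(k+1,k)/C$ or to $k+2$ with probability $c(k+1,k+2)/C$, where $C=c(k+1,k)+c(k+1,k+2)$ — and then applying the strong Markov property at the return time $\tau^+_{\{k+1\}}$, one reads off the linear relation
\begin{equation}
h=\frac{c(k+1,k)}{C}\bigl(1-p_k(0,k+1)\bigr)h+\frac{c(k+1,k+2)}{C}\Bigl(p_{k+2}(N+2,k+1)+\bigl(1-p_{k+2}(N+2,k+1)\bigr)h\Bigr),
\end{equation}
using that from $k$ the walk either reaches $0$ before $k+1$ (failure, probability $p_k(0,k+1)$) or returns to $k+1$, and symmetrically from $k+2$ it either reaches $N+2$ before $k+1$ (success) or returns. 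Solving this single linear equation for $h$ gives
\begin{equation}
p_{k+1}(N+2,0)=\frac{c(k+1,k+2)\,p_{k+2}(N+2,k+1)}{c(k+1,k+2)\,p_{k+2}(N+2,k+1)+c(k+1,k)\,p_k(0,k+1)},
\end{equation}
and substituting into the factorization yields the first identity at once. The second identity follows by the mirror-image argument, factorizing $p_{N+1}(0,N+2)=p_{N+1}(k+1,N+2)\,p_{k+1}(0,N+2)$ and noting that $p_{k+1}(0,N+2)=1-h$, which by irreducibility is the complementary escape probability from $k+1$.

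The algebra is entirely elementary; the only point genuinely requiring care is the justification of the renewal decomposition, namely that conditioning on one first jump together with the strong Markov property at $\tau^+_{\{k+1\}}$ correctly closes the recursion — equivalently, that the successive excursions of the walk away from $k+1$ into $\{0,\dots,k\}$ and $\{k+2,\dots,N+2\}$ are independent across returns. This is precisely the strong Markov property of $X(t)$, while the irreducibility assumption guarantees almost-sure absorption at $\{0,N+2\}$ in finite time, so that the escape probabilities are well-defined and the implicit geometric sum over returns to $k+1$ converges. I do not expect any serious obstacle beyond stating these standard facts cleanly.
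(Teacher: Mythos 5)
Your proof is correct and follows essentially the same route as the paper's: both factorize $p_1(N+2,0)=p_1(k+1,0)\,p_{k+1}(N+2,0)$ via the strong Markov property at $\tau_{\{0,k+1\}}$, and both determine $p_{k+1}(N+2,0)$ by decomposing into excursions from $k+1$ that end at $N+2$, at $0$, or back at $k+1$. The only cosmetic difference is that you solve the one-step renewal equation $h=(\text{prob.\ success})+(\text{prob.\ return})\,h$ directly, whereas the paper states the excursion distribution \eqref{erse} and invokes the i.i.d.-trials fact that success occurs before failure with probability $p_R/(p_R+p_L)$ — these are the same computation.
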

\begin{proof}
	These results emerge directly from a statistical physics treatment of first-passage/exit probabilities.   We prove here the first one and note that the proof of the second one is essentially the same. Consider the Markov chain $X(t)$ with transition rates given by the $c$ and define the two stopping times $\tau_{\{0,N+2\}}$ and
	$\tau_{\{0,k+1\}}$. We have that $p_1(N+2,0)=\mathbb P_1(X(\tau_{\{0,N+2\}})=N+2)$ and this can be expressed as
	$\mathbb P_1(X(\tau_{\{0,N+2\}})=N+2)=\mathbb P_1(X(\tau_{\{0,N+2\}})=N+2|X(\tau_{\{0,k+1\}})=k+1)\mathbb P_1(X(\tau_{\{0,k+1\}})=k+1)$. Now $\mathbb P_1(X(\tau_{\{0,k+1\}})=k+1)=p_{1}(k+1,0)$ and, by the strong Markov property, $\mathbb P_1(X(\tau_{\{0,N+2\}})=N+2|X(\tau_{\{0,k+1\}})=k+1)=\mathbb P_{k+1}(X(\tau_{\{0,N+2\}})=N+2)$. To determine this last probability, we note that when the Markov chain starts with $X(0)=k+1$ we have
	\begin{multline}\label{erse}
	\mathbb P_{k+1}\left(X(\tau^+_{\{0,k+1,N+2\}})=a\right)\\=\left\{
	\begin{array}{ll}
	\frac{c(k+1,k+2)p_{k+2}(N+2,k+1)}{c(k+1,k+2)+c(k+1,k)}\,, & \textrm{if}\ a=N+2\\
	\frac{c(k+1,k)p_{k}(0,k+1)}{c(k+1,k+2)+c(k+1,k)}\,, & \textrm{if}\ a=0\\
	1-\frac{c(k+1,k+2)p_{k+2}(N+2,k+1)+c(k+1,k)p_{k}(0,k+1)}{c(k+1,k+2)+c(k+1,k)} & \textrm{if}\ a=k+1\,.
	\end{array}
	\right.
	\end{multline}
	Now consider a sequence of independent identically distributed (i.i.d.) random variables that assume the values $N+2$ with probability $p_R$, $0$ with probability $p_L$, and $k$+1 with probability $1-p_R-p_L$; the probability to observe $N+2$ in this sequence before observing $0$ is given by $\frac{p_R}{p_R+p_L}$. 
	To be concrete, we can associate a random variable with the distribution~\eqref{erse} to each of the i.i.d.\ cycles obtained by cutting the trajectory of a Markov chain starting at $k+1$ at the return times to state $k+1$; the exit probability $\mathbb P_{k+1}(X(\tau_{\{0,N+2\}})=N+2)$ is indeed just the probability of seeing $N+2$ before $0$ in this sequence of random variables. 
	Hence, using the expressions in \eqref{erse}, we obtain the first relation in \eqref{useful}.
\end{proof}
We are now in a position to generalize the result in Section \ref{variational} as follows.
\begin{lemma}
	We have that formula \eqref{rateinomogeneo} coincides with
	\begin{equation}
	\mathcal R^N(j)=\Gamma_{\lambda_L c(0,1)p_1(N+1,0), \lambda_R c(N+1,N) p_N(0,N+1)}(j)\,.
	\end{equation}
\end{lemma}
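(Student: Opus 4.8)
The plan is to prove the identity by induction on the number of internal sites $N$, mirroring the argument used for the homogeneous symmetric case but now carrying the inhomogeneity through first-passage probabilities. The two engines of the induction are the discrete additivity Lemma~\ref{lemma-add}, which performs the minimization over a single interior potential and fuses two series edges into one effective edge, and the first-passage reduction identity~\eqref{useful}, which shows that the effective rates produced by this fusion are exactly the ones written in terms of the $p_\bullet(\cdot,\cdot)$. Convexity of $\Gamma$ (noted after~\eqref{defgamma}) guarantees throughout that the critical point is the unique global minimizer and that the infima are attained, so no analytic subtleties arise in the finite-dimensional minimizations.

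\emph{Base case.} For $N=1$ the lattice is $\{0,1,2\}$ and \eqref{rateinomogeneo} reads $\inf_{\phi_1}\{\Gamma_{\lambda_L c(0,1),\,\phi_1 c(1,0)}(j)+\Gamma_{\phi_1 c(1,2),\,\lambda_R c(2,1)}(j)\}$. Applying Lemma~\ref{lemma-add} with $\lambda=\lambda_L c(0,1)$, $a=c(1,0)$, $b=c(1,2)$ and $\mu=\lambda_R c(2,1)$ produces $\Gamma_{\frac{b\lambda}{a+b},\,\frac{a\mu}{a+b}}(j)$. Since the embedded jump at the single interior site gives $p_1(2,0)=c(1,2)/(c(1,0)+c(1,2))$ and $p_1(0,2)=c(1,0)/(c(1,0)+c(1,2))$, the two effective parameters are exactly $\lambda_L c(0,1)p_1(2,0)$ and $\lambda_R c(2,1)p_1(0,2)$, as claimed.

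\emph{Inductive step.} Assuming the statement for $N$ internal sites, for the system on $\{0,\dots,N+2\}$ I would split the sum in~\eqref{rateinomogeneo} into the subsystem sum $\sum_{x=0}^{N}\Gamma_{\phi_x c(x,x+1),\,\phi_{x+1}c(x+1,x)}(j)$ on $\{0,\dots,N+1\}$ plus the last-edge term $\Gamma_{\phi_{N+1}c(N+1,N+2),\,\lambda_R c(N+2,N+1)}(j)$, and first minimize over $\phi_1,\dots,\phi_N$ with $\phi_{N+1}$ held fixed. The inductive hypothesis, applied to the subsystem in which $N+1$ plays the role of right boundary with chemical potential $\phi_{N+1}$, replaces the subsystem sum by $\Gamma_{\lambda,\,a\phi_{N+1}}(j)$ with $\lambda=\lambda_L c(0,1)p_1(N+1,0)$ and $a=c(N+1,N)p_N(0,N+1)$; because the walk is nearest-neighbour on a path, the events of reaching $N+1$ (resp.\ $0$) before $0$ (resp.\ $N+1$) cannot involve the extra site $N+2$, so these probabilities coincide with the corresponding quantities on the full lattice appearing in~\eqref{useful}. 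Writing the last-edge term as $\Gamma_{b\phi_{N+1},\,\mu}(j)$ with $b=c(N+1,N+2)$ and $\mu=\lambda_R c(N+2,N+1)$, the remaining minimization over $\phi_{N+1}$ is precisely the one solved by Lemma~\ref{lemma-add} (with $a+b>0$ since $c>0$), yielding $\Gamma_{\frac{b\lambda}{a+b},\,\frac{a\mu}{a+b}}(j)$. Specializing~\eqref{useful} to $k+1=N+1$ and using $p_{N+2}(N+2,N+1)=p_{N+1}(N+1,N+2)=1$ gives $p_1(N+2,0)=\frac{c(N+1,N+2)p_1(N+1,0)}{c(N+1,N+2)+c(N+1,N)p_N(0,N+1)}$ and $p_{N+1}(0,N+2)=\frac{c(N+1,N)p_N(0,N+1)}{c(N+1,N+2)+c(N+1,N)p_N(0,N+1)}$, whence $\frac{b\lambda}{a+b}=\lambda_L c(0,1)p_1(N+2,0)$ and $\frac{a\mu}{a+b}=\lambda_R c(N+2,N+1)p_{N+1}(0,N+2)$; this is exactly the claimed effective-rate pair for $N+1$ internal sites, closing the induction.

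The main difficulty is not analytic but one of careful bookkeeping: one must check that the first-passage probabilities delivered by the inductive hypothesis on the subsystem $\{0,\dots,N+1\}$ are literally those entering the reduction identity~\eqref{useful} on the enlarged lattice (the nearest-neighbour path structure, which makes a walk hit $N+1$ before it can ever reach $N+2$, is what makes this identification clean), and that the algebraic output $\frac{b\lambda}{a+b},\,\frac{a\mu}{a+b}$ of Lemma~\ref{lemma-add} matches the series reduction of the $p_\bullet$'s term by term. Once these two correspondences are verified the statement follows, and it is worth noting that the general-$k$ form of~\eqref{useful} shows the same reduction could be performed by peeling any interior site, so the induction is robust to the choice of which edge is split off.
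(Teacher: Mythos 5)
Your proof is correct and follows essentially the same strategy as the paper's: induction on $N$, using Lemma~\ref{lemma-add} to perform the single-potential minimization and the first-passage identity~\eqref{useful} to recognize the resulting effective rates. The only (cosmetic) difference is that you peel off the last edge with ordinary induction, so that~\eqref{useful} enters in its degenerate form $k+1=N+1$ with $p_{N+2}(N+2,N+1)=p_{N+1}(N+1,N+2)=1$, whereas the paper splits at an arbitrary interior site $k+1$ under a strong inductive hypothesis ($M\leq N$) and applies~\eqref{useful} in full generality — a flexibility you yourself note in your closing remark.
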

\begin{proof}
	The proof is again obtained by induction. Consider first the case $N=1$. By Lemma \ref{lemma-add} we have that
	$$\mathcal R^1(j)=\Gamma_{\lambda_L \frac{c(0,1)c(1,2)}{c(1,2)+c(1,0)}, \lambda_R \frac{c(2,1)c(1,0)}{c(1,2)+c(1,0)}}(j)$$ which coincides with the statement of the lemma for $N=1$ since,
	by a direct computation, it is easy to see that $p_1(2,0)=\frac{c(1,2)}{c(1,2)+c(1,0)}$ and
	$p_1(0,2)=\frac{c(1,0)}{c(1,2)+c(1,0)}$.
	
	Let us now suppose that the statement is true for any $M\leq N$ and show that it is then true also for $N+1$.
	By the inductive hypothesis we can write
	\begin{align}
	\mathcal R^{N+1}(j)=&\inf_{\phi_{k+1}}\left\{\Gamma_{\lambda_L c(0,1)p_1(k+1,0), \phi_{k+1} c(k+1,k) p_k(0,k+1)}(j)\right. \notag \\
	&\left.+\Gamma_{\phi_{k+1} c(k+1,k+2)p_{k+2}(N+2,k+1), \lambda_R c(N+2,N+1) p_{N+1}(k+1,N+2)}(j)\right\}\,,
	\end{align}
	and, using Lemma \ref{lemma-add}, we obtain that the right-hand side above is equal to
	\begin{equation}
	\Gamma_{\frac{\lambda_L c(0,1)p_{1}(k+1,0)c(k+1,k+2)p_{k+2}(N+2,k+1)}{c(k+1,k+2)p_{k+2}(N+2,k+1)+c(k+1,k)p_k(0,k+1)}, \frac{\lambda_R c(N+2,N+1) p_{N+1}(k+1,N+2)c(k+1,k)p_k(0,k+1)}{c(k+1,k+2)p_{k+2}(N+2,k+1)+c(k+1,k)p_k(0,k+1)}}(j)\,.
	\end{equation}
	The proof is then concluded using the relations \eqref{useful}.
\end{proof}

\smallskip

We note that analogous results for current fluctuations in the one-dimensional inhomogeneous case were presented in Section 6.3 of~\cite{RH} with a heuristic discussion of effective input and output rates for specific sites, in particular the recursions in~\cite{RH} are related to the computations in this section.   The present work puts these results on a more rigorous footing and clarifies the relation to exit probabilities.  We emphasize again that there is no dependence on the specific form of the function $g$ (assuming it is superlinear) and the fluctuations coincide with those of free particles.  The fluctuations should even be the same for some non-Markovian zero-range dynamics (e.g., non-exponential waiting times at each site but constant probabilities to jump left and right) in regimes where there is no condensation, see~\cite{CMH} for some related discussion.

\section{Scaling limits}\label{sec:scaling}

In this section we determine the scaling limit of the functionals that we computed in the previous section. We restrict ourselves for simplicity to the symmetric homogeneous cases. The natural mathematical treatment would be in terms of Gamma convergence \cite{braides} but a full discussion would make this part of the paper rather technical and long. Instead we once again use a direct approach, explaining the scaling limit informally, and stressing the correspondence between the microscopic formulation and the macroscopic one.
Here the macroscopic formulation is that of macroscopic fluctuation theory \cite{MFT} which uses the large deviation rate functionals from the diffusive hydrodynamic scaling limit.

\subsection{Macroscopic fluctuation theory}

According to the MFT \cite{MFT}, the joint density and current large deviations from the diffusive hydrodynamic scaling limit of particle systems on a time interval $[0,T]$ and on a bounded domain $\Omega\subseteq \mathbb R^d$ are determined by the joint rate functional
\begin{equation}\label{principal}
\frac 14\int_0^Tdt\int_\Omega dx \frac{\Big|j(x,t)+D(\rho(x,t))\nabla \rho(x,t)\Big|^2}{\sigma(\rho(x,t))}\,.
\end{equation}
Here $j(x,t)\in \mathbb R^d$ is a smooth vector field and $\rho(x,t)\in \mathbb R^+$; they represent the observed space-time macroscopic current and density which have to be related by the continuity equation $\partial_t\rho+\nabla\cdot j=0$. The coefficients $D$ and $\sigma$ are respectively the diffusion coefficient and the mobility.

When we are interested in the fluctuations of the current through the system we have to suitably minimize the functional \eqref{principal}. In particular, to compute the large deviation rate functional for observing an average current $j(x)$ over a long time, we need to minimize \eqref{principal} under the constraint $\frac 1T\int_0^Tj(s,x)ds=j(x)$, and then take the limit $T\to +\infty$.

By general arguments \cite{MFT} we have that the result of this procedure for the rate functional is $+\infty$ unless $\nabla \cdot j=0$. For a class of models, viz.\ those that do not exhibit dynamical phase transitions, we have that the minimizer in \eqref{principal} is obtained by time-independent paths.  With our choice of superlinear $g$, the boundary-driven zero-range model indeed has no dynamical phase transitions and the rate functional is therefore given by
\begin{equation}\label{Umacro}
U(j)= \frac 14 \inf_\rho \int_\Omega dx \frac{\Big|j(x)+D(\rho(x))\nabla \rho(x)\Big|^2}{\sigma(\rho(x))}\,,
\end{equation}
where the infimum is over density profiles satisfying some boundary conditions imposed at $\partial \Omega$ \textrm{and suitable regularity conditions}.

In the case of homogeneous zero-range dynamics on a lattice with $c(x,y)=1$ for any pair of nearest-neighbor sites, we have that $D(\rho)=\Phi'(\rho)$ and $\sigma(\rho)=\Phi(\rho)$ where the function $\Phi(\rho)$ is the inverse of the function $\rho(\Phi)$ defined by $\rho(\Phi)=\Phi \frac{d}{d \Phi}\log Z(\Phi)$, with the partition function $Z(\Phi)$ as introduced below~\eqref{prod-zr}.
In the case of a strong interaction with the boundary sources (i.e.\ when the boundary rates scale with $N$ in the same way as the bulk rates) then the infimum in \eqref{Umacro} is over all the density profiles that satisfy the boundary conditions $\rho(x)=\rho(\lambda_x)$, $x\in \partial \Omega$ where $\lambda_x$ are the rates appearing in the generator.

\smallskip

In short, for the symmetric zero-range model in strong contact with the sources, the rate functional of
formula \eqref{Umacro} becomes
\begin{equation}\label{MFT-formula}
  \frac 14 \inf_\rho \int_{\Omega} \frac{\left|j(x)+\Phi'(\rho{(x)})\nabla \rho{(x)}\right|^2}{\Phi(\rho(x))}dx\,,
\end{equation}
and the infimum is over all density profiles such that $\Phi(\rho(x))=\lambda_x$ for $x\in \partial\Omega$.
If we define $\varphi(x):=\Phi(\rho(x))$, we see that
\eqref{MFT-formula} can be written as
\begin{equation}\label{MFT-formula2}
  \frac 14 \inf_\varphi\int_{\Omega}\frac{\left|j(x)+\nabla \varphi(x)\right|^2}{\varphi(x)} dx\,,
  \end{equation}
where the infimum is over the $C^1$ functions $\varphi$ such that $\varphi(x)=\lambda_x$ for $x\in \partial \Omega$.   Note that in~\eqref{MFT-formula}, $\Phi$ encodes a dependence on the function $g$ appearing in the rates, but in \eqref{MFT-formula2} there is no longer any such dependence. This change of variable shows that the fluctuations of the current are model-independent and coincide with those of the free-particle case. This is exactly the same phenomenon observed in formula \eqref{laff} and the minimization in \eqref{MFT-formula2} is the continuous counterpart of that on the right-hand side of \eqref{laff}.

In the one-dimensional case, the current must be independent of $x$; we set $\Omega=[0,1]$ and the infimum is over $\varphi$ such that $\varphi(0)=\lambda_L$
and $\varphi(1)=\lambda_R$.
By the convexity for $\varphi \geq 0$ of the functional in \eqref{MFT-formula2} we have a unique minimizer. The infimum is characterized by the Euler-Lagrange equations giving, with the boundary conditions, the differential problem
\begin{equation}\label{euler}
\left\{
\begin{array}{l}
-2\varphi''\varphi+(\varphi')^2=j^2\,,\\
\varphi(0)=\lambda_L\,, \  \varphi(1)=\lambda_R\,,
\end{array}
\right.
\end{equation}
which can be solved \cite{BDGJL1}.
We search for a quadratic solution $\varphi(x)=A+Bx+Cx^2$ and we obtain that, neglecting the boundary conditions, any quadratic solution solves the equation in \eqref{euler} provided that $B^2-4AC=j^2$. Note that this is exactly the same condition as to have a quadratic solution to the discrete equation \eqref{staz-fi}, as explained in Lemma~\ref{quadratico}.
Imposing the boundary conditions, we obtain
\begin{equation}\label{coeffb}
\left\{
\begin{array}{l}
A=\lambda_L\,, \\
B= -2\lambda_L+\sqrt{j^2+4\lambda_L\lambda_R}\,,\\
C=\lambda_L+\lambda_R-\sqrt{j^2+4\lambda_L\lambda_R}\,,
\end{array}
\right.
\end{equation}
which is the continuous counterpart of \eqref{coeffa}. Here again there is another possible choice of sign in front of the square root but the sign that we select is the one which gives a non-negative $\varphi$ in $[0,1]$.
Let us then use $\varphi^*$ for the quadratic optimal solution of \eqref{euler}. By a direct computation we have
\begin{equation}\label{U}
U(j)=\frac 14 \int_0^1\frac{\left(j+(\varphi^*)'(u)\right)^2}{\varphi^*(u)} du=\Gamma_{\lambda_L,\lambda_R}(j)\,.
\end{equation}
Perhaps surprisingly, we thus find that the continuous computation results in the same function $\Gamma$ as in the discrete case (with particular values of the parameters).

\subsection{From discrete to continuous}
\label{secdiscon}

We now show that we recover the continuous result just described  by the scaling limit of the discrete computations of the previous sections and, moreover, we see that at the discrete level we have a structure which is a direct counterpart of the continuous one.

First we recall some simple statements which will be used in our analysis and
can be obtained by elementary computations.
We have that
\begin{equation}\label{primo-limite}
\lim_{x\to +\infty}\left(\sqrt{j^2+x^2}-x\right)=0\,,
\end{equation}
and, furthermore, if $x_N$ and $y_N$ are two positive sequences such that
\begin{equation}
\left\{
\begin{array}{l}
\lim_{N\to +\infty}\frac{N}{y_N}=\kappa^{-1}\,,\\
\lim_{N\to +\infty}\left(x_N-y_N\right)=j\,,
\end{array}
\right.
\label{ancora-uno}
\end{equation}
then we have
\begin{equation}\label{secondo-ingrediente}
\lim_{N\to +\infty}N\Psi\left(x_N,y_N\right)=\frac{j^2}{2\kappa}\,,
\end{equation}
with the function $\Psi$ defined as in~\eqref{defpsi}. For a positive real value $a$ we have
\begin{equation}\label{terzo-ingr}
\log\left(\frac{x+\sqrt{x^2+a^2}}{a}\right)=\frac xa +o(x) \quad \text{as $x \to 0$}\,.
\end{equation}

Using the above facts, an elementary computation gives the following.  Consider $j, a\in \mathbb R$, $\lambda\in \mathbb R^+$ and $N$ an integer number; we have
\begin{equation}\label{unlimite}
\lim_{N\to +\infty}N^2\Gamma_{\lambda,\lambda +\frac aN}\left(\frac jN\right)=\frac 14 \frac{(j+a)^2}{\lambda}\,.
\end{equation}

To determine the scaling limit for the discrete rate functional and understand correctly the scaling factors that we need to consider, we start the discussion with a symmetry argument.
Consider two Markov chains having the same transition graph, one with transition rates $\left\{r(x,y)\right\}_{(x,y)\in E}$ and one with transition rates
$\left\{\lambda r(x,y)\right\}_{(x,y)\in E}$ where $\lambda$ is a positive parameter. Let us denote by respectively
$I$ and $I_\lambda$ the corresponding joint rate functional of the LDP for
the empirical measure and flow.  With an explicit expression (i.e.\  \eqref{zr-rate-gen} in the zero-range case)
we get the relationship  $I_\lambda(\mu, Q)=\lambda I\left(\mu, \frac Q\lambda\right)$.
From this we immediately obtain the relation
\begin{equation}
\mathcal R_{\lambda}^N(j)=\lambda \mathcal R^N\left(\frac{j}{\lambda}\right)\,,
\label{simm-zr}
\end{equation}
where $\mathcal R^N$ is the rate functional for the current of the zero-range model (as before) and $\mathcal R_{\lambda}^N$ is the rate functional when the rates are multiplied by a factor of $\lambda$.

We focus first on the one-dimensional case, noting that here we simply have $|\Omega_N|=N$. To consider our symmetric zero-range model in the diffusive scaling limit the jump rates
must be multiplied by a factor of $N^2$. Moreover, in this diffusive scaling limit we assign a mass equal to $N^{-1}$ to each particle so that, to a given macroscopic current $j$, there corresponds a number of jumps (microscopic current) equal to $jN$. We therefore need to consider the rate functional
$\mathcal R^N_{N^2}\left(jN\right)$ for large $N$. Since $N$ is the size of the system for large $N$, the microscopic rate functional becomes proportional to $N$ and consequently the non-trivial scaling limit is
\begin{equation}\label{illimite}
\lim_{N\to +\infty}\frac{1}{N}\mathcal R^N_{N^2}(jN)=\lim_{N\to+\infty}N\mathcal R^N\left(\frac{j}{N}\right)\,.
\end{equation}
The limit in \eqref{illimite} can be directly computed  using formula \eqref{ratecurrzero} and we reobtain  the macroscopic result \eqref{U}
\begin{equation}
\lim_{N\to+\infty}N\mathcal R^N\left(\frac{j}{N}\right)=\Gamma_{\lambda_L,\lambda_R}(j)=U(j)\,.
\end{equation}
We took a pointwise limit here but, building up a suitable topological framework, a Gamma convergence could be established too \cite{braides}.

We now turn to the $d$-dimensional case, considering $\Omega_N$ as the portion of the regular squared lattice contained on a bounded domain $\Omega\subseteq \mathbb R^d$. In this case we have $|\Omega_N|\sim N^d$. We consider the scaling limit of the functional $\sum_{\{x,y\}\in \mathcal E_N}\Gamma_{\phi_x,\phi_y}(j(x,y))$; the mass of each particle is $1/N^d$ and the volume is proportional to $N^d$. We give a hint of the argument here. Consider a function $\phi$ that is obtained as $\phi_x=\varphi(x/N)$ for a $C^1([0,1])$ positive function $\varphi$. Let $j(x)$ be a smooth continuous vector field and let its flow across $\Sigma_{(x,y)}$, the $(d-1)$-dimensional plaquette dual to the edge $(x,y)$, be $j_N(x,y):=\int_{\Sigma_{(x,y)}}j\cdot n\, d\sigma$. Here $n$ is the unit vector oriented in the direction of $(x,y)$ and $d\sigma$ is the differential surface element. Note that the area of $\Sigma_{(x,y)}$ is $1/N^{d-1}$ and $j_N(x,y)$ is therefore of the same order. Using the result \eqref{unlimite}, we obtain as a generalized Riemann sum
\begin{align}
 \lim_{N\to +\infty} &\frac{1}{N^d}\sum_{\{x,y\}\in \mathcal E_N}\Gamma_{N^2\varphi\left(\frac{x}{N}\right),N^2\varphi\left(\frac{y}{N}\right)}\left(j_N(x,y)N^d\right) \notag \\
&=\lim_{N\to +\infty} N^{2-d}\sum_{\{x,y\}\in \mathcal E_N}\Gamma_{\varphi\left(\frac{x}{N}\right),\varphi\left(\frac{y}{N}\right)}\left(j_N(x,y)N^{d-2}\right)\\
&=\frac 14\int_\Omega dx \frac{\left|j(x)+\nabla\varphi(x)\right|^2}{\varphi(x)}\,.
\end{align}
The argument $j_N(x,y)N^d$ in the first line is due to the fact that the microscopic current on the graph should correspond to the flow multiplied by $N^d$ since the mass of each particle is $N^{-d}$. We can apply \eqref{unlimite} to deduce the last equality since $j_N(x,y)N^{d-2}\sim j_i(x)/N$ when $y=x+e_i/N$. 

\subsection{Additivity principle}

The additivity principle, introduced in \cite{bd}, is a general principle that allows one to deduce, for a class of one-dimensional diffusive systems, the validity of rate functionals of the form \eqref{MFT-formula}, \eqref{MFT-formula2}.
The additivity principle fails when there is a dynamical phase transition \cite{MFT}.

To be specific, let us consider again a one-dimensional diffusive system with diffusion coefficient $D(\rho)$ and mobility $\sigma(\rho)$ on an interval of length $L$ and in contact with density reservoirs at left and right with values $\rho_{L}$ and $\rho_R$ respectively.  The additivity principle predicts that large deviations for the current over long times have rate functional given by \cite{bd}
\begin{equation}\label{additivity}
{U}^L_{\rho_L,\rho_R}(j)=\inf_\rho \frac 14\int_0^L \frac{\left(j+D(\rho(x))\nabla \rho(x)\right)^2}{\sigma(\rho(x))}dx\,,
\end{equation}
where the infimum is over all density profiles that match $\rho(0)=\rho_L$ and $\rho(L)=\rho_R$; we here explicitly indicate both the system length and the boundary densities in our notation for $U$. Splitting the interval $(0,L)$ in \eqref{additivity} into two intervals
$(0,\ell_1)$ and $(\ell_1,L)$ of length $\ell_1$ and $\ell_2=L-\ell_1$, we have
\begin{equation}\label{addpro}
U^L_{\rho_L,\rho_R}(j)=\inf_{\rho_\ell}\left\{U^{\ell_1}_{\rho_L,\rho_\ell}(j)+U^{\ell_2}_{\rho_\ell,\rho_R}(j)\right\}\,.
\end{equation}
The additivity property \eqref{addpro} is satisfied by all solutions of the variational problem \eqref{additivity}, see for example \cite{Cap,bd}.

As we have already seen, the surprising feature of the zero-range model is that the rate functional of the continuous variational problem, when written in terms of the boundary chemical potentials,
$U^L_{\lambda_L,\lambda_R}(j)$ coincides exactly with a rate functional obtained in a microscopic discrete computation; we have
 \begin{equation}
U^L_{\lambda_L,\lambda_R}(j) = \Gamma_{\frac{\lambda_L}{L}, \frac{\lambda_R}{L}}(j)
 \end{equation}
 and~ the additivity formula~\eqref{addpro} then reduces directly to~\eqref{unitstep} with the identifications $a=\frac{1}{\ell_1}$, $b=\frac{1}{\ell_2}$, $\lambda=\frac{\lambda_L}{\ell_1}$ and $\mu=\frac{\lambda_R}{\ell_2}$.

\section{General graphs and electrical networks}\label{gensec}

The above computation can be interpreted in terms of an analogous electrical network construction. We here explain this construction and apply it to the case of a generic graph, demonstrating how to use a network reduction technique with special rules for components connected in series and in parallel, as well as a generalized star-triangle correspondence.  Note that the electrical components that we consider can be associated to \emph{non-reversible dynamics}; in other words, we present a network reduction theory for non-reversible random walks which coincides with the classic electrical network approach in the case of symmetric components.  A theory of electrical networks for non-reversible Markovian dynamics has been developed recently in \cite{B} with the introduction of auxiliary electrical components. In our construction we use just one type of electrical component with each individual component characterized by a left and a right conductivity.  This geometric approach is related to the \emph{trace process} \cite{L} used in the computation of the so-called \emph{capacities}.

\subsection{Non-reversible electric-like components}

To be concrete, the basic non-reversible electric-like component of our construction is an unoriented edge, i.e.\ an element of $\mathcal E_N$, that is characterized by a right conductivity $c_R$, with a corresponding right resistance $(c_R)^{-1}$, and a left conductivity $c_L$, with corresponding left resistance $(c_L)^{-1}$. There is an \emph{energy} or \emph{cost} associated to the passage of a current $j\in \mathbb R$ through this component when it is connected to a potential $\lambda_L$ at the left boundary and a potential $\lambda_R$ at the right boundary. This cost function is given by
\begin{equation}\label{defenergy}
\mathcal U_{\lambda_L,\lambda_R}^{c_L,c_R}(j):=\Gamma_{\lambda_Lc_L,\lambda_Rc_R}(j)\,.
\end{equation}
The typical current flowing through the system from left to right is the one that minimizes the cost (with the minimum value being zero); it is given by $\lambda_Lc_L-\lambda_Rc_R$ which can, of course, be positive or negative. Note that, since the component is not reversible and has a left conductivity which may be different from the right one, passage of current is possible even in the case that the two extremes are connected to the same potential. The classic theory of electrical networks is recovered for components with $c_L=c_R$.

\smallskip

The general framework is then the following.  We consider our arbitrary network with internal nodes $\Omega_N$ and boundary nodes $\partial \Omega_N$ and we fix the applied potential to the values $(\lambda_v)_{v\in \partial \Omega_N}$. Each link of the network is characterized by a left and a right conductivity; as before, $\mathcal E_N$ is the set of undirected edges while $E_N$ is the set of directed edges. If $\{x,y\}\in \mathcal E_N$ and $(x,y),(y,x)\in E_N$ we call  $c(x,y)$ the conductivity from $x$ to $y$ and $c(y,x)$ the conductivity in the opposite direction. When $(x,y)\not\in E_N$ we set $c(x,y)=0$. The cost or energy dissipated observing a current $(j(x,y))_{(x,y)\in \mathcal E_N}$,  with $j(x,y)=-j(y,x)$, and a potential configuration $(\phi_x)_{x\in \Omega_N\cup \partial \Omega_N}$ such that $(\phi_x)_{x\in \partial \Omega_N}=(\lambda_x)_{x\in \partial\Omega_N}$ is given by
\begin{eqnarray}
\sum_{(x,y)\in \mathcal E_N}\mathcal U^{c(x,y),c(y,x)}_{\phi_x,\phi_y}(j(x,y))\,,
\end{eqnarray}
which is exactly the functional \eqref{energy-fij} related to the relative entropy of two zero-range processes. Recall that $\mathcal U^{c(x,y),c(y,x)}_{\phi_x,\phi_y}(j(x,y))=\mathcal U^{c(y,x),c(x,y)}_{\phi_y,\phi_x}(-j(x,y))$. Charge is obviously conserved such that its creation or destruction occurs only at the boundary sites and we must always have $\nabla\cdot j(x)=0$ for any $x\in \Omega_N$, i.e.\ a divergence-free field as defined in~\eqref{divfree}.

When only the current $(j(x,y))_{(x,y)\in \mathcal{E}_N}$ and the boundary potentials $(\phi_x)_{x\in \partial \Omega_N}=(\lambda_x)_{x\in \partial \Omega_N}$ are fixed, the cost is given by $\mathcal R^N(j)$ in \eqref{laff}. When instead the potential $(\phi_x)_{x\in\Omega_N}$ is fixed then the cost is obtained by minimizing over all the possible currents satisfying the zero-divergence constraint.

We next show that, for some specific problems, we can reduce the complexity of the network by substituting the original electrical components with effective ones according to special rules for components connected in series, in parallel, or in a generalized star configuration.  In each case, the costs associated to the original graph and to the reduced one are the same. Note that some analogous results for ``effective renormalized hopping rates'' in the series case are implied by the heuristic discussion in~\cite{RH} but the general electrical-component framework presented here, in particular the star-triangle transformation, provides a formal basis for the treatment of more complex networks.

\subsubsection{Components in series}
\begin{figure}
	
	\centering

	\includegraphics[width=0.7\textwidth]{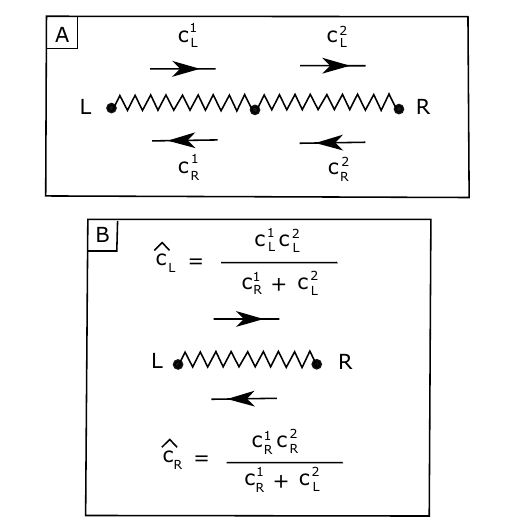}
	\caption{Two electrical components in series connecting the nodes $L$ and $R$ (diagram A). The equivalent single electrical component with effective conductivities (diagram B).}\label{comp_series}
\end{figure}
Consider two electrical components connected in series with component $1$ on the left of component $2$ as in Figure~\ref{comp_series}.  The vertex set
of the graph is here given by just three  points; dropping the $N$ subscript and labelling the intermediate node as $M$ we have $\Omega=\{M\}$ with boundary nodes $\partial \Omega = \{L,R\}$.  We denote by $c^i_L$ ($c^i_R$) the left (right) conductivity of the $i^\mathrm{th}$ component and we suppose that $c^1_R+ c^2_L>0$. From the random walk perspective, this latter condition is rather natural since when it is violated the walker is trapped in the middle point and consequently charge accumulates there (see also the similar discussion at the start of Section~\ref{zrgengen}).

We now put the two extremes in contact with fixed potentials $\lambda_L$ and $\lambda_R$ at $L$ and $R$, respectively. Trivially, any current with zero  divergence on $\{M\}$ satisfies  $j(L,M)=j$
and $j(M,R)=j$ for some $j\in \mathbb{R}$.
If the current $j$ is fixed, then the optimum potential $\phi_M$ at the junction point $M$ between components $1$ and $2$ is obtained by minimizing the sum of the energies; using Lemma~\ref{lemma-add} we thus have
\begin{align}
  \inf_{\phi_M}\left[\mathcal U_{\lambda_L,\phi_M}^{c^1_L,c^1_R}(j)+\mathcal U_{\phi_M,\lambda_R}^{c^2_L,c^2_R}(j)\right] &= \inf_{\phi_M}\left[\Gamma_{\lambda_Lc^1_L,\phi_Mc^1_R}(j)+\Gamma_{\phi_Mc^2_L,\lambda_Rc^2_R}(j)\right]\nonumber \\
  &=\Gamma_{\lambda_L\frac{c^1_Lc^2_L}{c^1_R+c^2_L},\lambda_R\frac{c^2_Rc^1_R}{c^1_R+c^2_L}}(j) \nonumber \\
  &=\mathcal U_{\lambda_L,\lambda_R}^{\frac{c^1_Lc^2_L}{c^1_R+c^2_L},\frac{c^2_Rc^1_R}{c^1_R+c^2_L} }(j)\,.
\end{align}
Let us next imagine that the above electrical components are part of a larger network, but still with an intermediate node $M$ shared only by these two components in series. As long as the value of the potential $\phi_M$ is not fixed and not relevant, then the node $M$ can be removed and  the two electrical components in series can be substituted by one single electrical component having effective left and right conductivities $\hat c_L$, $\hat c_R$ given respectively by
\begin{equation}
\hat c_L=\frac{c^1_Lc^2_L}{c^1_R+c^2_L}\,, \qquad \hat c_R= \frac{c^2_Rc^1_R}{c^1_R+c^2_L}\,.
\end{equation}
Note that the above expressions are not symmetric under the exchange of $1$ and $2$ but we recover the classic rules of electrical networks when $c^i_L=c^i_R$.

\subsubsection{Components in parallel} Turning now to investigate electrical components in parallel, we begin with a simple observation:
\begin{lemma}\label{testossimoro} For each $ j \in \mathbb{R}$ we have
	\begin{equation}\label{ossimoro}
	\inf_{\{ (j_1,j_2): j_1+j_2= j\}}\big\{\Gamma_{\lambda_1,\lambda_2}(j_1)+\Gamma_{\mu_1,\mu_2}(j_2)\}=
	\Gamma_{\lambda_1+\mu_1,\lambda_2+\mu_2}(j) \,.
	\end{equation}
\end{lemma}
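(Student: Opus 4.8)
The plan is to reduce the statement to the elementary additivity of the single-channel cost $\Psi$ together with the defining contraction of $\Gamma$ recorded in Remark~\ref{fipoi}. Conceptually this is the parallel-composition analogue of the series Lemma~\ref{lemma-add}. By Remark~\ref{fipoi}, $\Gamma_{\lambda_1,\lambda_2}(\cdot)$ is the large deviation rate function for the net current $N^{\lambda_1}_T-N^{\lambda_2}_T$ of two independent Poisson streams, and likewise $\Gamma_{\mu_1,\mu_2}(\cdot)$ for $N^{\mu_1}_T-N^{\mu_2}_T$. Placing the two components in parallel amounts to superposing these four independent streams; since $N^{\lambda_1}_T+N^{\mu_1}_T$ is Poisson of parameter $\lambda_1+\mu_1$ and $N^{\lambda_2}_T+N^{\mu_2}_T$ is Poisson of parameter $\lambda_2+\mu_2$, with the two sums still independent, the total net current is distributed as the net current of a \emph{single} channel with parameters $\lambda_1+\mu_1$ and $\lambda_2+\mu_2$, whose rate function is $\Gamma_{\lambda_1+\mu_1,\lambda_2+\mu_2}$. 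Since the left-hand side of~\eqref{ossimoro} is exactly the contraction of the joint rate function under the constraint $j=j_1+j_2$, it must equal this single-channel rate function.

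To make this rigorous at the level of the functions, I would first isolate the key sub-lemma: for $a,b\ge 0$,
\begin{equation*}
\inf_{\{(a,b):a+b=c\}}\{\Psi(a,\alpha)+\Psi(b,\beta)\}=\Psi(c,\alpha+\beta)\,.
\end{equation*}
This is the $\Psi$-level version of Poisson superposition. It can be checked directly by observing that $\theta\mapsto \alpha(e^\theta-1)$ is the convex conjugate of $\Psi(\cdot,\alpha)$, so the conjugate of the infimal convolution on the left is $\alpha(e^\theta-1)+\beta(e^\theta-1)=(\alpha+\beta)(e^\theta-1)$, which is the conjugate of $\Psi(\cdot,\alpha+\beta)$; convexity and lower semicontinuity then give the identity, with attainment of the minimum following from coercivity of $\Psi$.

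With this in hand I would expand both terms on the left of~\eqref{ossimoro} using the definition of $\Gamma$ from Remark~\ref{fipoi}, writing $j_1=a^+-a^-$ and $j_2=b^+-b^-$ with $a^\pm,b^\pm\ge 0$. The left-hand side becomes a single infimum over the four nonnegative variables $a^\pm,b^\pm$ subject to $(a^+-a^-)+(b^+-b^-)=j$ of
\begin{equation*}
\Psi(a^+,\lambda_1)+\Psi(b^+,\mu_1)+\Psi(a^-,\lambda_2)+\Psi(b^-,\mu_2)\,.
\end{equation*}
Introducing $q^+:=a^++b^+$ and $q^-:=a^-+b^-$, the constraint reads $q^+-q^-=j$ and involves the splittings $a^++b^+=q^+$, $a^-+b^-=q^-$ only through $q^\pm$; hence the ``plus'' group and the ``minus'' group decouple, and minimizing each with the sub-lemma yields $\Psi(q^+,\lambda_1+\mu_1)+\Psi(q^-,\lambda_2+\mu_2)$. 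Taking the remaining infimum over $q^+-q^-=j$ reproduces exactly $\Gamma_{\lambda_1+\mu_1,\lambda_2+\mu_2}(j)$, which is the claim.

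The main obstacle is bookkeeping rather than depth. I must justify the regrouping of the nested infima, namely that the constraint couples the four variables only through $q^\pm$ so that reordering and decoupling the minimization is legitimate, with coercivity of $\Psi$ ensuring the minima are genuinely attained rather than merely approached. I must also check the degenerate cases in which one of $\lambda_i,\mu_i$ vanishes, using the conventions $\Gamma_{\alpha,0}(q)=\Psi(q,\alpha)$ and $\Gamma_{0,\alpha}(q)=\Psi(-q,\alpha)$ from Remark~\ref{fipoi}; these follow from the same $\Psi$-additivity identity once the $+\infty$ branches of $\Psi$ in~\eqref{defpsi} are accounted for.
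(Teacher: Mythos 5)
Your proposal is correct, and the rigorous part of it proceeds by a genuinely different route from the paper. The paper's proof is purely probabilistic and is essentially your first paragraph: it views $\Gamma_{\lambda_1,\lambda_2}(j_1)+\Gamma_{\mu_1,\mu_2}(j_2)$ as the joint rate function of two independent net Poisson currents, applies the contraction principle to the map $(j_1,j_2)\mapsto j_1+j_2$, and identifies the contracted rate function via the distributional identity $N^{\lambda_1}_T+N^{\mu_1}_T\sim N^{\lambda_1+\mu_1}_T$; it stops there. Your analytic argument replaces this with pure convex analysis: the sub-lemma
\begin{equation*}
\inf_{\{(a,b)\,:\,a+b=c\}}\bigl\{\Psi(a,\alpha)+\Psi(b,\beta)\bigr\}=\Psi(c,\alpha+\beta)
\end{equation*}
(an infimal-convolution identity, checked via the conjugates $\alpha(e^\theta-1)$, or even more directly by exhibiting the minimizer $a=\alpha c/(\alpha+\beta)$ on the compact constraint set $[0,c]$), followed by the collapse of the nested infima over $a^\pm,b^\pm$ into two decoupled groups indexed by $q^\pm=a^\pm+b^\pm$. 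This buys self-containedness: it works entirely at the level of the functions $\Psi$ and $\Gamma$, needs no large-deviation machinery (the contraction principle tacitly requires goodness of the rate function, which the paper does not spell out), and transparently handles attainment and the degenerate cases $\lambda_i=0$ or $\mu_i=0$, since the conventions $\Gamma_{\alpha,0}=\Psi(\cdot,\alpha)$ and $\Gamma_{0,\alpha}=\Psi(-\cdot,\alpha)$ are themselves just the infimum definition with a degenerate parameter. What the paper's route buys instead is brevity and a conceptual explanation: the identity is the shadow, at the level of rate functions, of Poisson superposition at the level of processes. One small refinement to your write-up: rather than appealing to lower semicontinuity of the infimal convolution to pass from equality of conjugates to equality of functions, it is cleaner to note that the candidate minimizer $a=\alpha c/(\alpha+\beta)$, $b=\beta c/(\alpha+\beta)$ achieves the value $\Psi(c,\alpha+\beta)$ exactly, while the conjugate computation gives the matching lower bound; this closes the argument without invoking exactness of inf-convolutions.
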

\begin{proof}

\begin{figure}
	
	\centering
	
	\includegraphics[width=0.7\textwidth]{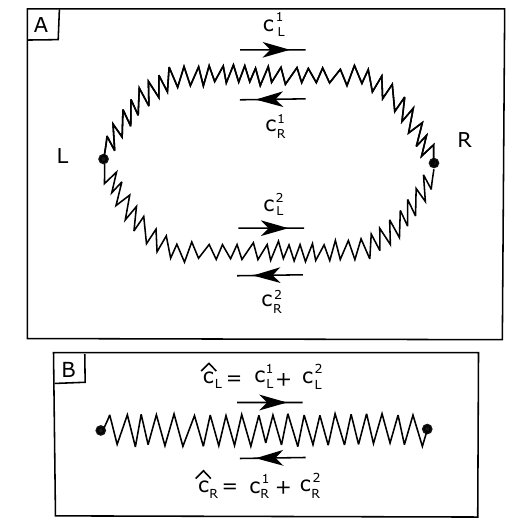}
	\caption{Two electrical components in parallel connecting the nodes $L$ and $R$ (diagram A). The equivalent single electrical component with effective conductivities (diagram B).}\label{parallelo}
\end{figure}

	The proof of this lemma can be obtained simply by a probabilistic argument.  Let $N^{\lambda_1}_T$, $N^{\lambda_2}_T$, $N^{\mu_1}_T$, $N^{\mu_2}_T$ be independent Poisson processes. The joint rate functional for the pair $\frac{1}{T}\left(N^{\lambda_1}_T-N^{\lambda_2}_T,N^{\mu_1}_T-N^{\mu_2}_T\right)$ is given by
	$(j_1,j_2)\mapsto \Gamma_{\lambda_1,\lambda_2}(j_1)+\Gamma_{\mu_1,\mu_2}(j_2)$. By the contraction principle,
	the rate functional for the sum of the two components of the pair is then given by the left-hand side of \eqref{ossimoro}.  However, we also have that the sum of the components has the same distribution as
	$\frac{1}{T}\left(N^{\lambda_1+\mu_1}_T-N^{\lambda_2+\mu_2}_T\right)$, whose rate functional is the right-hand side of~\eqref{ossimoro}.
\end{proof}

The above lemma implies the following rule of equivalence for two electrical components connected in parallel.
Consider again component $1$ with left (right) conductivity $c^1_L$ ($c^1_R$) and component $2$ with left (right) conductivity $c^2_L$ ($c^2_R$). We now arrange the two components so that their left and right extremes are coincident and set to potentials $\lambda_L$ and $\lambda_R$, respectively (see Figure \ref{parallelo}).
The cost when current $j_1$ crosses component $1$ and current $j_2$ crosses component $2$ is given by
\begin{equation}\label{enotte}
\mathcal U_{\lambda_L, \lambda_R}^{c^1_L,c^1_R}(j_1)+
\mathcal U_{\lambda_L, \lambda_R}^{c^2_L,c^2_R}(j_2)\,.
\end{equation}
When the potentials $\lambda_L$, $\lambda_R$ are fixed and we are not interested in the exact currents flowing across each of the two components but only in the total current flowing between the two nodes, then the cost associated to a total current $j$ is obtained minimizing \eqref{enotte} under the constraint $j_1+j_2=j$.
Indeed, by Lemma \ref{testossimoro}, the minimum of \eqref{enotte} over all $j_1$ and $j_2$ such that
$j_1+j_2=j$ is given by $\mathcal U_{\lambda_L, \lambda_R}^{c^1_L+c^2_L,c^1_R+c^2_R}(j)$.
This means that the pair of electrical components can be substituted by one single component having effective conductivities given by
\begin{equation}
\hat c_L=c^1_L+c^2_L\,, \qquad \hat c_R=c^1_R+c^2_R\,.
\end{equation}

\subsubsection{Star-triangle relation}

The star-triangle transformation allows us to replace a star graph with three rays by a triangle as in Figure \ref{star-triangle}. While in the classic theory of electrical networks there are transformations in both directions, here we will have a transformation from the star to the triangle but generally not the reverse. Note that going from the star to the triangle decreases by one the number of vertices and thus reduces the complexity of the network.  We label the vertices of the star $A$, $B$, $C$, and $O$; only the boundary nodes $A$, $B$, and $C$ are retained in the triangle.

To be concrete, the equivalence holds and we can replace the star by the triangle as long as we do not observe the potential in the central node $O$ of the star and simply impose that the total current entering/exiting from each node of the triangle through the edges of the triangle is identical to that in the star with the same potentials on the boundary nodes (see Lemma \ref{lemma-a3} for a precise statement).

\begin{figure}
	
	\centering
	
	\includegraphics[width=0.7\textwidth]{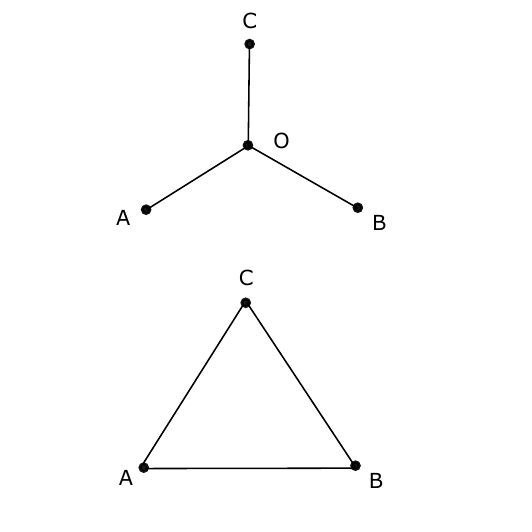}
	\caption{Geometry of the star-triangle relation.  Non-reversible components on the links of the star at the top are to be replaced by effective components on the links of the triangle at the bottom.} \label{star-triangle}
\end{figure}

\smallskip

We denote the hopping rates (conductivities) in the star by $c(A,O)$, $c(B,O)$, $c(C,O)$, $c(O,A)$, $c(O,B)$, $c(O,C)$ and those in the triangle by $\hat{c}(A,B)$, $\hat{c}(B,C)$, $\hat{c}(C,A)$, $\hat{c}(B,A)$, $\hat{c}(C,B)$, $\hat{c}(A,C)$.
The algebraic relations between conductivities which have to be satisfied in order that a star-triangle relation may hold are not that easily guessed but some clues come from considering the special cases when one among $j(A,O)$, $j(B,O)$, and $j(C,O)$ is identically zero.
It turns out that when the conductivities $c$ are fixed, the conductivities $\hat c$ are uniquely determined by
\begin{equation}\label{star-tria}
\left\{
\begin{array}{l}
\hat{c}(A,B)=\frac{c(A,O)c(O,B)}{c(O,A)+c(O,B)+c(O,C)}\,,\\
\hat{c}(B,A)=\frac{c(B,O)c(O,A)}{c(O,A)+c(O,B)+c(O,C)}\,,\\
\hat{c}(B,C)=\frac{c(B,O)c(O,C)}{c(O,A)+c(O,B)+c(O,C)}\,,\\
\hat{c}(C,B)=\frac{c(C,O)c(O,B)}{c(O,A)+c(O,B)+c(O,C)}\,,\\
\hat{c}(C,A)=\frac{c(C,O)c(O,A)}{c(O,A)+c(O,B)+c(O,C)}\,,\\
\hat{c}(A,C)=\frac{c(A,O)c(O,C)}{c(O,A)+c(O,B)+c(O,C)}\,.\\
\end{array}
\right.
\end{equation}
Further intuition and a formal justification of these transformations is given later in this section with a natural interpretation in terms of the \emph{trace process} in Section~\ref{trace}.

\smallskip

We first discuss the inversion of~\eqref{star-tria}.  To show that this is not possible in general we observe that, if we construct the product of the conductivities $\hat c$ obtained in the previous formulas,  we obtain the same result going clockwise or anticlockwise in the triangle, i.e.\ we have
\begin{equation}\label{ungnu}
\hat{c}(A,B)\hat{c}(B,C)\hat{c}(C,A)=\hat{c}(A,C)\hat{c}(C,B)\hat{c}(B,A)\,.
\end{equation}
The reverse procedure is therefore possible only when \eqref{ungnu} is satisfied.  Moreover, even in this case, the conductivities $c$ are not uniquely defined as we now show.

We first state two simple facts that will be useful in the following.
Consider the system of equations in the variables $x$, $y$, $z$ given by
\begin{equation}\label{primo-sist}
\left\{
\begin{array}{l}
x=a_1y\\
y=a_2z\\
z=a_3x
\end{array}
\right.
\end{equation}
where $a_1$, $a_2$, $a_3$ are real numbers such that $a_1a_2a_3=1$. Equations \eqref{primo-sist} have a one-parameter family of solutions given by
\begin{equation}\label{sol-primo}
x=\left(\frac{a_1}{a_3}\right)^{\frac 13}\ell\,,\ \ \  y=\left(\frac {a_2}{a_1}\right)^{\frac 13}\ell\,,\ \ \ z=\left(\frac {a_3}{a_2}\right)^{\frac 13}\ell\,, \ \ \ \ell\in \mathbb R\,.
\end{equation}
Consider also the system of equations in the variables $x$, $y$, $z$ given by
\begin{equation}\label{secondo-sist}
\left\{
\begin{array}{l}
b_1=\frac{xy}{x+y+z}\\
b_2=\frac{yz}{x+y+z}\\
b_3=\frac{zx}{x+y+z}\\
\end{array}
\right.
\end{equation}
with $b_1$, $b_2$, $b_3$ real numbers.
Equations~\eqref{secondo-sist} have a unique solution given by
\begin{equation}\label{sol-seconda}
x=\frac{b_1b_3+b_1b_2+b_2b_3}{b_2}\,, \ \ \ y=\frac{b_1b_3+b_1b_2+b_2b_3}{b_3}\,, \ \ \ z=\frac{b_1b_3+b_1b_2+b_2b_3}{b_1}\,.
\end{equation}

Using these facts we now consider inverting \eqref{star-tria}. Let us introduce the notation
$\hat{\gamma}(X,Y):=\frac{\hat c(X,Y)}{\hat c(Y,X)}$ where $X$ and $Y$ are any two of $A$, $B$ and $C$. Likewise we define ${\gamma}(X,O):=\frac{c(X,O)}{c(O,X)}$ where $X$ is $A$, $B$ or $C$.
Using~\eqref{star-tria}, we obtain the equations
\begin{equation}
\left\{
\begin{array}{l}
{\gamma}(A,O)=\hat{\gamma}(A,B){\gamma}(B,O)\\
{\gamma}(B,O)=\hat{\gamma}(B,C){\gamma}(C,O)\\
{\gamma}(C,O)=\hat{\gamma}(C,A){\gamma}(A,O)\\
\end{array}
\right.
\end{equation}
which, since $\hat{\gamma}(A,B)\hat{\gamma}(B,C)\hat{\gamma}(C,A)=1$ due to \eqref{ungnu}, have a one-parameter family of solutions
\begin{equation}\label{sol-primo2}
{\gamma}(A,O)=\left(\frac{\hat{\gamma}(A,B)}{\hat{\gamma}(C,A)}\right)^{\frac 13}\ell\,,\ \ \   {\gamma}(B,O)=\left(\frac{\hat{\gamma}(B,C)}{\hat{\gamma}(A,B)}\right)^{\frac 13}\ell\,,\ \ \ {\gamma}(C,O)=\left(\frac{\hat{\gamma}(C,A)}{\hat{\gamma}(B,C)}\right)^{\frac 13}\ell\,, 
\end{equation}
where $\ell\in \mathbb R^+$ as in~\eqref{sol-primo}.  From~\eqref{star-tria} we can also get equations relating the three variables $c(O,A)$, $c(O,B)$, $c(C,O)$ to ${\gamma}(A,O)$, ${\gamma}(B,O)$, ${\gamma}(C,O)$:
\begin{equation}
\left\{
\begin{array}{l}
\frac{\hat{c}(A,B)}{{\gamma}(A,O)}=\frac{c(O,A)c(O,B)}{c(O,A)+c(O,B)+c(O,C)}\\
\frac{\hat{c}(B,C)}{{\gamma}(B,O)}=\frac{c(O,B)c(O,C)}{c(O,A)+c(O,B)+c(O,C)}\\
\frac{\hat{c}(C,A)}{{\gamma}(C,O)}=\frac{c(O,A)c(O,C)}{c(O,A)+c(O,B)+c(O,C)}.
\end{array}
\right.
\end{equation}
By \eqref{sol-seconda} the solution to the above system of equations is
\begin{equation}
\left\{
\begin{array}{l}
c(O,A)=\frac 1\ell \frac{\frac{\hat{c}(A,B)\hat{c}(B,C)}{\left(\frac{\hat{\gamma}(B,C)}{\hat{\gamma}(C,A)}\right)^{\frac 13}}+\frac{\hat{c}(A,B)\hat{c}(C,A)}{\left(\frac{\hat{\gamma}(A,B)}{\hat{\gamma}(B,C)}\right)^{\frac 13}}+\frac{\hat{c}(C,A)\hat{c}(B,C)}{\left(\frac{\hat{\gamma}(C,A)}{\hat{\gamma}(A,B)}\right)^{\frac 13}}}{\frac{\hat{c}(B,C)}{\left(\frac{\hat{\gamma}(B,C)}{\hat{\gamma}(A,B)}\right)^{\frac 13}}}\\
c(O,B)=\frac 1\ell \frac{\frac{\hat{c}(A,B)\hat{c}(B,C)}{\left(\frac{\hat{\gamma}(B,C)}{\hat{\gamma}(C,A)}\right)^{\frac 13}}+\frac{\hat{c}(A,B)\hat{c}(C,A)}{\left(\frac{\hat{\gamma}(A,B)}{\hat{\gamma}(B,C)}\right)^{\frac 13}}+\frac{\hat{c}(C,A)\hat{c}(B,C)}{\left(\frac{\hat{\gamma}(C,A)}{\hat{\gamma}(A,B)}\right)^{\frac 13}}}{\frac{\hat{c}(C,A)}{\left(\frac{\hat{\gamma}(C,A)}{\hat{\gamma}(B,C)}\right)^{\frac 13}}}\\
c(O,C)=\frac 1\ell \frac{\frac{\hat{c}(A,B)\hat{c}(B,C)}{\left(\frac{\hat{\gamma}(B,C)}{\hat{\gamma}(C,A)}\right)^{\frac 13}}+\frac{\hat{c}(A,B)\hat{c}(C,A)}{\left(\frac{\hat{\gamma}(A,B)}{\hat{\gamma}(B,C)}\right)^{\frac 13}}+\frac{\hat{c}(C,A)\hat{c}(B,C)}{\left(\frac{\hat{\gamma}(C,A)}{\hat{\gamma}(A,B)}\right)^{\frac 13}}}{\frac{\hat{c}(A,B)}{\left(\frac{\hat{\gamma}(A,B)}{\hat{\gamma}(C,A)}\right)^{\frac 13}}}\,,
\end{array}
\right.
\end{equation}
and hence we see that the conductivities from the central node to the boundaries of the star are not uniquely defined but depend on the free parameter $\ell\in \mathbb R^+$.  Finally, we can obtain the values $c(A,O)=c(O,A){\gamma}(A,O)$, $c(B,O)=c(O,B){\gamma}(B,O)$
$c(C,O)=c(O,C){\gamma}(C,O)$ and see, in contrast, that the conductivities from the boundaries to the central node are fixed numbers and do not depend on the free parameter $\ell$.

\smallskip

Formally, the star-triangle transformation follows by the following statement.
\begin{lemma}\label{lemma-a3}
For any $\phi_A,\phi_B,\phi_C\geq 0$ and for any $j(A,O), j(B,O), j(C,O)$ such that $j(A,O)+j(B,O)+j(C,O)=0$, we have
\begin{align}\label{staruguale}
\inf_{\phi_O}&\left\{\Gamma_{\phi_Ac(A,O),\phi_O c(O,A)}(j(A,O)) +\Gamma_{\phi_Bc(B,O),\phi_Oc(O,B)}(j(B,O))\right.  \nonumber\\ & \left. +~ \Gamma_{\phi_Cc(C,O),\phi_Oc(O,C)}(j(C,O))\right\} \nonumber \\
&\quad =  \inf
\left\{\Gamma_{\phi_A\hat{c}(A,B),\phi_B\hat{c}(B,A)}(j(A,B))+\Gamma_{\phi_B\hat{c}(B,C),\phi_C\hat{c}(C,B)}(j(B,C)) \right. \nonumber \\ & \left.
\quad\quad+~\Gamma_{\phi_C\hat{c}(C,A),\phi_A\hat{c}(A,C)}(j(C,A))\right\}
\end{align}
where the $\inf$ on the right-hand side is over
\begin{multline}\label{forcontj}
\left\{j(A,C)+j(A,B)=j(A,O), j(C,A)+j(C,B)=j(C,O), \right. \\ \left. j(B,A)+j(B,C)=j(B,O)\right\}
\end{multline}
and where the conductivities involved are related by \eqref{star-tria} or equivalent formulation.
\end{lemma}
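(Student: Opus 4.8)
The plan is to prove \eqref{staruguale} following the strategy of the discrete additivity Lemma~\ref{lemma-add}: reduce both sides to a single one-parameter minimization, write down the stationarity conditions, and then match the optimal values using the representation \eqref{basic} and the $\asinh$-additivity \eqref{add-h}. First I would note that both problems are genuinely one-dimensional. On the left the currents $j(A,O),j(B,O),j(C,O)$ are fixed, so the only variable is the central potential $\phi_O$. On the right, the three constraints \eqref{forcontj} together with $j(A,O)+j(B,O)+j(C,O)=0$ determine the triangle currents up to a single circulation: writing $j(A,B)=p+\theta$, $j(B,C)=q+\theta$, $j(C,A)=r+\theta$ for any particular solution $(p,q,r)$, the constraints hold for every $\theta\in\mathbb R$, so the right-hand side is an infimum over $\theta$ alone. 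Convexity of $\Gamma$ in its current argument (and the joint convexity in the rate arguments noted after \eqref{defgamma}) gives a unique minimizer on each side. The choice \eqref{star-tria} is dictated by the trace-process reading of the star: a unit of flux entering the central node $O$ from $X$ is routed out towards $Y$ with probability $c(O,Y)/\sum_{Z}c(O,Z)$, so the flux traversing $X\to O\to Y$ carries effective forward rate $\phi_Xc(X,O)\,c(O,Y)/\sum_{Z}c(O,Z)=\phi_X\hat c(X,Y)$, which is exactly the forward rate in the matching triangle term.

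Second, I would record the two stationarity conditions. Differentiating the left-hand side in $\phi_O$ reproduces \eqref{stazd} at the node $x=O$,
\begin{equation*}
2\phi_O\sum_{X\in\{A,B,C\}}c(O,X)=\sum_{X\in\{A,B,C\}}\sqrt{j(X,O)^2+4\phi_O\phi_Xc(O,X)c(X,O)}\,.
\end{equation*}
For the right-hand side, using that $\partial_j\Gamma_{\lambda_1,\lambda_2}(j)=\asinh(j/2\sqrt{\lambda_1\lambda_2})-\log\sqrt{\lambda_1/\lambda_2}$ coincides with the tilt $F$ of \eqref{Fsol}, the derivative in $\theta$ yields the Kirchhoff-type relation $F(A,B)+F(B,C)+F(C,A)=0$, i.e.\ the cyclic sum of tilts around $A\to B\to C\to A$ vanishes. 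The bridge between the two problems is the correspondence of their minimizers: I would show that the optimal $\phi_O$ on the left and the optimal $\theta$ on the right induce the same effective edge currents once $\hat c$ is given by \eqref{star-tria}, so that the node-$O$ balance on the left is equivalent to the cycle balance on the right.

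Third, to match the optimal values I would use the compact form $\Gamma_{\lambda_1,\lambda_2}(j)=jF-\sqrt{j^2+4\lambda_1\lambda_2}+\lambda_1+\lambda_2$, with $F=\partial_j\Gamma$ the tilt, so that the value on each side splits into an entropic part $\sum jF$ and a flux part $\sum(\lambda_1+\lambda_2-\sqrt{j^2+4\lambda_1\lambda_2})$. For the entropic part the $\asinh$ contributions recombine through \eqref{add-h} exactly as in \eqref{add-h-special}, while the logarithmic contributions are consistent around the cycle because $\hat c(A,B)\hat c(B,C)\hat c(C,A)=\hat c(A,C)\hat c(C,B)\hat c(B,A)$ by \eqref{ungnu}. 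The flux part requires showing that each star square root equals the total (forward plus backward) flux $\phi_Xc(X,O)e^{F(X,O)}+\phi_Oc(O,X)e^{F(O,X)}$ across the edge $\{X,O\}$, and that summing these under the node-$O$ stationarity reproduces the sum of the three triangle square roots. I expect this flux recombination to be the main obstacle: it is the analog of the ``tricky computation'' that closes the proof of Lemma~\ref{lemma-add}, and it is precisely the step in which the specific form \eqref{star-tria} is forced. Finally I would dispose of the degenerate cases (some $\phi_X=0$, or a vanishing directed conductivity) using the conventions of Remark~\ref{fipoi}, and remark that the whole identity also follows conceptually, bypassing the algebra, from the invariance of the boundary-current large deviations under passage to the trace process~\cite{L}, as discussed in Section~\ref{trace}.
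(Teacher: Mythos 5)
Your proposal is a strategy outline rather than a proof, and the gap sits exactly at its centre. You correctly reduce both sides to one-parameter minimizations (over $\phi_O$ on the left, over a circulation $\theta$ on the right) and you correctly write the two stationarity conditions; but the content of the lemma is the identity between the two optimal \emph{values}, and this you do not establish. You yourself flag the ``flux recombination'' as the main obstacle and leave it undone, and the companion step --- that the optimal $\phi_O$ and the optimal $\theta$ ``induce the same effective edge currents'', equivalently that at the optimum the triangle tilts split as $\hat F(X,Y)=F(X,O)+F(O,Y)$ --- is asserted without any argument. These are not routine verifications: the paper itself states that a proof by direct computation is highly non-trivial and for that reason abandons the route you propose. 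Note also that the entropic matching is not ``exactly as in \eqref{add-h-special}'': in Lemma~\ref{lemma-add} one and the same current flows through both edges, whereas here six distinct edge currents are involved, and the $\asinh$ terms recombine only \emph{after} the tilt decomposition above is proved, so \eqref{add-h} cannot be invoked directly.

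The paper's actual proof is the probabilistic argument you relegate to a closing aside: since \eqref{laff} is independent of $g$, one may take independent particles; the left-hand side is then the rate functional for the star currents, and by Remark~\ref{remarktilde} (holding times at $O$ can be collapsed) together with a thinning of the injection Poisson processes, the streams of particles travelling from $X$ to $Y$ are independent Poisson processes with parameters $\phi_X\hat c(X,Y)$ --- this is precisely where \eqref{star-tria} comes from --- so the right-hand side is obtained from their joint rate functional by contraction through \eqref{forcontj}; equality holds because both sides are rate functionals of the same observable. If you insist on the analytic route, the missing pieces can in principle be supplied: for instance, summing the identity $\phi_Oc(O,Z)e^{F(O,Z)}=\tfrac12\bigl(-j(Z,O)+\sqrt{j(Z,O)^2+4\phi_O\phi_Zc(O,Z)c(Z,O)}\bigr)$ over $Z\in\{A,B,C\}$ and using the zero-divergence constraint plus \eqref{stazd} at $O$ shows that $\sum_Z c(O,Z)e^{F(O,Z)}=\sum_Z c(O,Z)$ at the optimal $\phi_O$, which is the fact that makes the tilt decomposition, and hence both the entropic and flux matchings, go through. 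As written, however, your proof is incomplete precisely where the lemma is hard.
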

\begin{proof}
A proof by a direct computation is highly non-trivial; we present here a probabilistic proof. Since the rate functionals that we are interested in are the same for any superlinear zero-range process, we consider independent particles. We start from the star geometry and interpret it as a graph with one single node $\Omega_N=\{O\}$ in contact with three ghost sites $\partial \Omega_N=\{A,B,C\}$. By formula \eqref{laff} the large deviation rate functional for the current through the edges of this star is finite only on currents that satisfy the zero-divergence condition at $O$, which is one of the hypotheses of the lemma; in that case the finite value of the rate functional is given by the left-hand side of \eqref{staruguale}.

\smallskip

We now do a higher-level large deviation principle; rather than just observing the flow across the edges of the star, we label particles and for each of them we look at the boundary site where it is created and the boundary site where it exits.
By remark \ref{remarktilde} we can imagine that the particles that are created at $O$ from the different boundaries immediately jump outside the system, since the large deviation rate functionals for long times are invariant with respect to how long particles spend on the site. Note that a particle at $O$ will exit from site $X$, with $X=A,B,C$ with probability respectively $p_X:=\frac{c(O,X)}{c(O,A)+c(O,B)+c(O,C)}$.
We can represent the dynamics of the particles using independent Poisson processes with a ``thinning'' construction as described in the following paragraphs.

Given a Poisson process $(N^\lambda_T)_{T\in \mathbb R^+}$ of parameter $\lambda$,  we denote by
$\mathcal{T}_{p_i}(N_T^\lambda)$, (with $0\leq p_i\leq 1$ and $\sum_ip_i=1$), the processes obtained by thinning of the process $N^\lambda_T$; a point of $N^\lambda_T$ belongs to $\mathcal T_{p_i}(N^\lambda_T)$ with probability $p_i$ independently from all the other points. The processes  $\mathcal T_{p_i}(N^\lambda_T)$ are independent Poisson processes of parameters $p_i\lambda$.

Our construction is the following. The number of particles injected onto $O$ from boundary site $X$ ($X=A,B,C$) is given by the Poisson process $N^{\phi_Xc(X,O)}_T$ and the three processes are independent.
We perform an independent thinning procedure for each of them obtaining a family of independent Poisson processes $\left(\mathcal T_{p_Y}(N^{\phi_Xc(X,O)})\right)_{X,Y=A,B,C}$. The process $\mathcal T_{p_Y}(N^{\phi_Xc(X,O)})$ represents the number of particles created from the ghost site $X$ and exiting through $Y$ and we can imagine that such particles go directly from $X$ to $Y$ through the edge $(X,Y)$ of the triangle in the top of Figure \ref{star-triangle}. The currents along the edges of the triangle are therefore given by
\begin{equation}\label{civetta}
 j_T(X,Y)=\frac 1T\left[ \mathcal T_{p_Y}(N^{\phi_Xc(X,O)})-\mathcal T_{p_X}(N^{\phi_Yc(X,O)})\right]\,, \qquad X,Y=A,B,C\,,
\end{equation}
and, since all the Poisson processes are independent, we have that the joint large deviation principle for these currents is given by
\begin{multline}
\Gamma_{\phi_A\hat{c}(A,B),\phi_B\hat{c}(B,A)}(j(A,B))+\Gamma_{\phi_B\hat{c}(B,C),\phi_C\hat{c}(C,B)}(j(B,C)) \nonumber \\ +
\Gamma_{\phi_C\hat{c}(C,A),\phi_A\hat{c}(A,C)}(j(C,A))
\end{multline}
with the $\hat{c}(X,Y)$ as in~\eqref{star-tria}.  Finally, the currents along the edges of the star are related to those on the edges of the triangle by \eqref{forcontj} so the right-hand side of \eqref{staruguale} is obtained by the contraction principle.
The proof is finished.
\end{proof}

\subsection{Star-$K_n$ transformations}

Lemma \ref{lemma-a3} can be generalized to a star with $n$ non-central (boundary) nodes labelled $X_1, X_2, \ldots X_n$ each linked to the same central node $O$.  This $n$-ray star can be transformed to an equivalent $n$-complete graph where the central node is removed and each of the $n$ boundary nodes is linked to each of the other $n-1$ nodes. Using a similar argument to the previous subsection, the effective conductivity $\hat{c}(X_i,X_j)$ going from boundary node $X_i$ to boundary node $X_j$ must be given in terms of conductivities in the star by
    \begin{equation*}
      \hat{c}(X_i, X_j)=\frac{ c(X_i,O) c(O,X_j)}{\sum_{i=1}^N c(O,X_i)}.
    \end{equation*}

\subsection{Trace process} \label{trace}
Given $X(t)$ the random walk on $(\Omega_N\cup \partial\Omega_N,E_N)$ with jump rates $(c(x,y))_{(x,y)\in E_N}$ and $S\subset \Omega_N\cup \partial\Omega_N $; we can construct the corresponding trace process $X^S(t)$ which is still a Markov process on $((\Omega_N\cup \partial\Omega_N)\setminus S, E'_N)$ with $E'$ a suitable set of edges. For the general construction and proofs we refer to \cite{L}; here we consider just the case when $S$ is a single node in $\Omega_N$.

Given a node $x\in \Omega_N$ the corresponding trace process $X^x(t)$ is most naturally defined in terms of trajectories. Let $(X(t))_{t\in[0,T]}$ be a trajectory of the original random walk and let $I_i\subseteq [0,T]$ be the time intervals when the walker is at $\{x\}$. The trajectory $\left(X^x(t)\right)_{t\in [0,T-\sum_i|I_i|]}$ of the trace process is obtained cutting out all the periods that the walker spends at $x$ and gluing back together the pieces of trajectories obtained. The process obtained in this way is still Markov and in the case of deletion of one single node the new transition graph and the new rates are exactly the ones obtained by the star-$K_n$ transformations; see \cite{L} for proofs and more details. This also leads to a relation of our construction with harmonic functions which we will not discuss further here.

\section{Large deviations across a cutset and effective conductivity}
We show now how to use the network reduction of the previous section to obtain large deviation rate functionals through a cutset of a general graph.

\subsection{Two sources}

Let us consider a generic graph having just two sites  where particles can be created/annihilated, i.e.\ we have that $\partial \Omega_N$ consists of just two nodes.
The framework is the one in Figure \ref{2-sources} where the two nodes belonging to $\partial \Omega_N$ have been called $L/R$ and marked with a square dot. The edges have left and right conductivities coinciding with the generator weights as in Section~\ref{zrgengen}; to be concrete, $c(x,y)$ represents the conductivity from $x$ to $y$. The parameters $\lambda_L$ and $\lambda_R$ are fixed at the two boundary sites (i.e.\ the two elements of $\partial \Omega_N$) and determine the intensity of creation of particles there.

\begin{figure}
	
	\centering
	
	\includegraphics[width=0.7\textwidth]{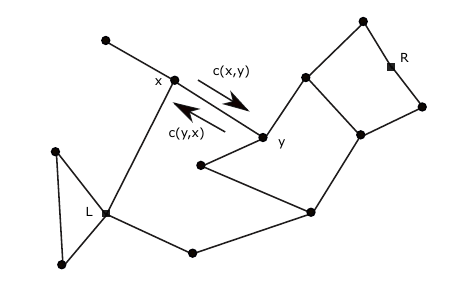}
	\caption{A general graph with two single sources/sinks denoted by $L$ and $R$. Each unoriented bond in the graph is characterized by left and right conductivities as illustrated in the case of the bond $\{x,y\}$.}
        \label{2-sources}
\end{figure}
Let us consider $\tau^+:=\tau^+_{L\cup R}:=\inf\{t>\tau^1 : X(t)\in L\cup R\}$ where $X(\cdot)$ is a Markov chain with transition rates given by $c(x,y)$, $(x,y)\in E_N$. We now define $p^+_L(R,L):=\mathbb P_L\left(X(\tau^+)\in R\right)$, where $\mathbb P_v$ is the Markov measure with initial condition $X(0)=v$.
A similar notation can obviously be used with $L$ and $R$ exchanged. Such probabilities are related to the \emph{capacities} of the sets $L,R$ \cite{L}. Observe that $p^+_L(R,L)$ does not depend on the values $c(R,x)$  and does not change under a global multiplication by a common factor $\alpha$ of all the rates $c(L,x)\to \alpha c(L,x)$. Similar statements are true for $p^+_R(L,R)$.

Let us consider the unoriented graph $(\Omega_N\cup \partial \Omega_N, \mathcal E_N)$ and split the vertices into two connected components: $\Omega^L_N$ containing $L$ and $\Omega^R_N$ containing $R$. We call the collection $\mathcal C$ of unoriented edges $\{x,y\}$, such that $x$ and $y$ belong to different components, an \emph{$L$-$R$ cutset}. For any discrete vector field $j$ such that $\nabla \cdot j=0$ in $\Omega_N$,  we have  by the discrete Gauss-Green theorem that the flux of $j$ across $\mathcal C$ from $L$ to $R$
\begin{equation}\label{flowcut}
j(\mathcal C):=\sum_{\left\{\{x,y\}\in \mathcal C, x\in \Omega^L_N, y\in \Omega^R_N\right\}}j(x,y)
\end{equation}
does not depend on the specific cutset selected. The simplest cutsets are constituted by all the edges having $L$ as a vertex or all the edges having $R$ as a vertex.

We define the empirical flow $j_T(\mathcal C)$ across the cutset $\mathcal C$ as formula~\eqref{flowcut} for the empirical current of \eqref{defempc}. The following result states that, if we are interested in the LDP for $j_T(\mathcal C)$, we can reduce the resistor-like network to one single effective component connecting $L$ and $R$ with effective conductivity from $L$ to $R$ given by $p^+_L(R,L)$ and  effective conductivity from $R$ to $L$ given by $p^+_R(L,R)$; the rate functional for  $j_T(\mathcal C)$  coincides with the cost of this effective single component. Let us define $\hat\lambda_L:=\sum_{x\in \Omega_N}\lambda_Lc(L,x)$ and
$\hat\lambda_R:=\sum_{x\in \Omega_N}\lambda_Rc(R,x)$

\begin{proposition}\label{propf}
Consider a graph with only two boundary vertices, $L$ and $R$, which have associated chemical potentials $\lambda_L$ and $\lambda_R$ respectively. The empirical flow $j_T(\mathcal C)$ across any $L$-$R$ cutset $\mathcal C$ satisfies a large deviation principle when $T\to +\infty$ with rate functional given by
\begin{align}\label{vincenzo}
\mathcal F(q)& = \inf_\phi\inf_{\left\{j: j(\mathcal C)=q\right\}}\sum_{(x,y)\in \mathcal{E}_N}\Gamma_{\phi_xc(x,y),\phi_yc(y,x)}(j(x,y))\nonumber \\
&=\Gamma_{\hat\lambda_Lp^+_L(R,L),\hat\lambda_Rp^+_R(L,R)}(q)\,, \qquad q\in \mathbb R\,,
\end{align}
where the first infimum is over $(\phi_x)_{x\in \Omega_N}$, keeping fixed $\phi_L=\lambda_L$ and $\phi_R=\lambda_R$.
\end{proposition}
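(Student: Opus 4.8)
The plan is to read off the first equality in \eqref{vincenzo} directly from Theorem \ref{Th1} and then to establish the second equality by collapsing the whole network to a single effective edge. For the first equality, note that the flux $j(\mathcal C)$ defined in \eqref{flowcut} is independent of the chosen cutset by the discrete Gauss-Green theorem, so the event $\{j(\mathcal C)=q\}$ is intrinsic. The LDP for $j_T(\mathcal C)$ then follows by applying the contraction principle to the full-current LDP of Theorem \ref{Th1}: since $j_T(\mathcal C)$ is a linear functional of the empirical current $j_T$, the rate functional is $\mathcal F(q)=\inf_{\{j:\,j(\mathcal C)=q\}}\mathcal R^N(j)$, and inserting the representation \eqref{laff}, in which $\mathcal R^N$ already carries the minimization over the bulk potential $\phi$ (with $\phi_L=\lambda_L$, $\phi_R=\lambda_R$), yields exactly the first line of \eqref{vincenzo}.

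For the second equality I would use the reduction rules of Section \ref{gensec}. The idea is to eliminate the internal nodes of $\Omega_N$ one at a time: deleting a node $O\in\Omega_N$ together with its incident edges and replacing the resulting star by the complete graph on its neighbours via the star-$K_n$ transformation (the $n$-ray generalization of Lemma \ref{lemma-a3}), and merging any edges that become parallel by Lemma \ref{testossimoro}. Each such move leaves the cost functional unchanged: the infimum over the potential at the deleted node and over the currents internal to the eliminated star is reproduced exactly by the effective component, while the divergence-free constraint guarantees that the through-fluxes, and in particular $j(\mathcal C)$, are preserved. After all nodes of $\Omega_N$ have been removed we are left with the two boundary nodes $L,R$ joined by a single effective edge, whose only free current is the through-current; this must equal $q$. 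Hence the nested infimum in \eqref{vincenzo} equals the single-edge cost $\mathcal U^{\hat c_L,\hat c_R}_{\lambda_L,\lambda_R}(q)=\Gamma_{\lambda_L\hat c_L,\lambda_R\hat c_R}(q)$.

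It then remains to identify the effective conductivities $\hat c_L,\hat c_R$. By the interpretation in Section \ref{trace}, iterating the one-node eliminations computes precisely the trace process of the walk $X(t)$ with rates $c$ on the set $\{L,R\}$. The holding rate of this trace process at $L$ is the escape rate $\sum_{x\in\Omega_N}c(L,x)$, and upon leaving $L$ the trace process lands at $R$ with probability equal to the excursion probability $p^+_L(R,L)$; therefore the effective rate from $L$ to $R$ is $\hat c_L=\big(\sum_{x\in\Omega_N}c(L,x)\big)\,p^+_L(R,L)$, so that $\lambda_L\hat c_L=\hat\lambda_Lp^+_L(R,L)$, and symmetrically $\lambda_R\hat c_R=\hat\lambda_Rp^+_R(L,R)$. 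Substituting gives the claimed value $\Gamma_{\hat\lambda_Lp^+_L(R,L),\hat\lambda_Rp^+_R(L,R)}(q)$.

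The delicate point is justifying that the cost is genuinely invariant through the \emph{entire} sequence of eliminations, i.e.\ that the nested infima commute and that a general high-degree node with distinct left and right conductivities can always be removed by the star-$K_n$ rule while leaving only the cutset flux as surviving degree of freedom. To sidestep this bookkeeping I would give, as in the proof of Lemma \ref{lemma-a3}, a direct probabilistic argument for independent particles (legitimate since \eqref{laff} is $g$-independent). With the cutset taken to be the edges incident to $L$, the net flux across $\mathcal C$ equals, in the long-time limit, the number of particles injected from $L$ and absorbed at $R$ minus the number injected from $R$ and absorbed at $L$. An $L$-injected particle starts at $x$ with probability $c(L,x)/\sum_{x'}c(L,x')$ and is then absorbed at $R$ with probability $p_x(R,L)$, so its overall probability of exiting at $R$ is exactly $p^+_L(R,L)$. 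By Poisson thinning these two counting processes are independent Poisson processes of rates $\hat\lambda_Lp^+_L(R,L)$ and $\hat\lambda_Rp^+_R(L,R)$, and Remark \ref{fipoi} identifies the rate functional of their normalized difference as $\Gamma_{\hat\lambda_Lp^+_L(R,L),\hat\lambda_Rp^+_R(L,R)}(q)$, completing the proof.
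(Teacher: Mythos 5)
Your proposal is correct, but for the heart of the matter — the identification of the contracted variational formula with $\Gamma_{\hat\lambda_Lp^+_L(R,L),\hat\lambda_Rp^+_R(L,R)}(q)$ — you take a genuinely different route from the paper. The paper argues by induction on the number of internal nodes: at each step one node is eliminated via the star-$K_n$ transformation, equality \eqref{staruguale} of Lemma \ref{lemma-a3} guarantees that the minimum in the first line of \eqref{vincenzo} is unchanged, and the key observation is that the \emph{products} $\hat\lambda_L p^+_L(R,L)$ and $\hat\lambda_R p^+_R(L,R)$ (though not their factors separately) are invariant under a single-node elimination, because the reduced walk is the trace process of the original; the base case with one internal node follows from the series and parallel rules (Lemmas \ref{lemma-add} and \ref{testossimoro}). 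You sketch this same reduction, but then deliberately sidestep the inductive bookkeeping in favour of a one-shot probabilistic proof: coupling to independent particles (legitimate since \eqref{laff} is $g$-independent), Poisson thinning by the exit probabilities, and Remark \ref{remarktilde} to discount transit times. This is in effect the generalization to arbitrary graphs of the coupling argument of Section \ref{coupling} and of the probabilistic proof of Lemma \ref{lemma-a3} — a route the paper explicitly says can be extended to general graphs but does not carry out for this Proposition. What each approach buys: your global coupling gives the answer in one step and makes the role of the excursion probabilities transparent, at the price of identifying the two lines of \eqref{vincenzo} only indirectly (both are convex, finite, hence continuous rate functionals of the same LDP, so they coincide by uniqueness) and of invoking the same heuristic control over particles still in transit that the paper itself uses in Section \ref{coupling}; the paper's induction stays entirely within the variational/network-reduction framework, proves the equality of the two expressions as a direct analytic identity, and isolates precisely which quantity is conserved at each reduction step. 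Your trace-process identification of the effective conductivities, $\hat c_L=\bigl(\sum_{x}c(L,x)\bigr)p^+_L(R,L)$ and its mirror image, is also correct and consistent with Section \ref{trace}.
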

\begin{proof}
  We prove the statement using induction on the number of nodes. In the case of a graph with one single internal node we have that the statement of the lemma is obtained with a reduction of two components in series and parallel (as before); the result follows directly by Lemmas \ref{lemma-add} and \ref{testossimoro}.

  We now suppose that the statement is true for any graph with $N$ internal nodes and show that then it is true also for a graph with $N+1$ nodes. The proof follows directly from Lemma \ref{lemma-a3} and the equivalent version for the $k$-star to $k$-complete graph. Consider any internal node in the $(N+1)$-node graph and replace the $k$-star subgraph to which it belongs with the associated $k$-complete graph with corresponding rates. All the divergence-free currents on the two graphs are related by \eqref{forcontj}, or the corresponding formula for a $k$-star graph, and in this zero-divergence case we know that the two currents have the same flow across any cutset. By equality \eqref{staruguale} we have that the infimum over $\phi$ in the first line of \eqref{vincenzo} is the same for the original graph and the reduced one. The proof of the lemma is finished by observing that $\hat\lambda_L p_L^+(R,L)$ and $\hat\lambda_R p_R^+(L,R)$ are the same for the original and the reduced graph since the random walk on the reduced graph is the trace process of the original with a single node removed and a direct coupling argument can be used.  [If the traced-out node is not connected by an edge to a boundary node, then  $p_L^+(R,L)$, $p_R^+(L,R)$, $\hat\lambda_L$, and $\hat\lambda_R$ are all individually unchanged; if the traced-out node is connected to a left (right) boundary node, then in general both $p_L^+(R,L)$ and $\hat\lambda_L$ ($p_R^+(L,R)$ and $\hat\lambda_R$) change but in such a way that their product stays invariant.]
\end{proof}

\subsection{The general case}

The general case can be reduced to the two-source case. Consider a situation like in Figure \ref{2d-1-figura} where we have a general graph in contact with several external sources and we have a cutset that splits the graph and the sources into two components. In Figure \ref{2d-1-figura}, the internal nodes are drawn as round black dots, the external nodes by squared black dots, the unoriented graph as continuous lines and the cutset by a dotted line.

In terms of the current across the cutset, the system is equivalent to a two-source graph with all the external sources on the left side identified with one single left source $L$, as in Figure \ref{2d-2-figura}, and all the sources on the right side identified with a single right source $R$. Let us denote by $\partial \Omega_N^L$ and $\partial \Omega_N^R$ respectively the left and right boundary vertices in the original graph. The chemical potential associated to the left effective boundary point $L$ in Figure \ref{2d-2-figura} can be chosen to be
\begin{equation}
\lambda_L:=\sum_{ \left\{ (x,y)\in E_N, x\in \partial \Omega_N^L\right\} } \lambda_xc(x,y)\,,
\end{equation}
with the chemical potential associated to the right effective boundary point $R$ in Figure \ref{2d-2-figura} correspondingly set as
\begin{equation}
  \lambda_R:=\sum_{ \left\{ (x,y)\in E_N, x\in \partial \Omega_N^R\right\} } \lambda_xc(x,y)\,.
\end{equation}
Each edge $\{x,y\}\in \mathcal E_N$ with $x\in\partial \Omega_N^L$ and $y\in \Omega_N$ can then be substituted by the edge $\{L,y\}$ with conductivities $c(L,y)=\frac{\lambda_xc(x,y)}{\lambda_L}$
and $c(y,L)=c(y,x)$. Of course, there are analogous formulas for the right part too. The problem is now transformed into a two-source problem and we can apply Proposition \ref{propf}, observing that with the choice above we have $\hat\lambda_L=\lambda_L$ and $\hat\lambda_R=\lambda_R$.

\begin{figure}
	
	\centering
	
	\includegraphics[width=0.7\textwidth]{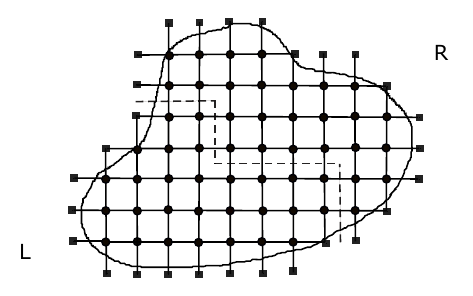}
	\caption{A general graph divided by a cutset of the edges (dashed line); the boundary nodes $\partial \Omega_N$ are drawn as black squares and divided by the cutset into two different classes.}\label{2d-1-figura}
\end{figure}

\begin{figure}
	
	\centering
	
	\includegraphics[width=0.7\textwidth]{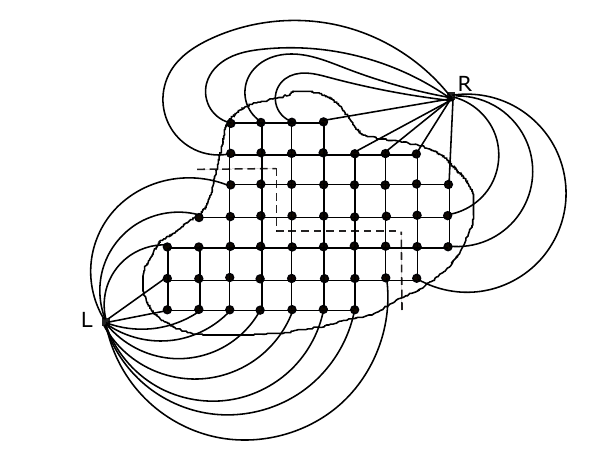}
	\caption{A general graph divided by a cutset of the edges (dashed line) where all the sources/sinks for the two components are identified into two effective sources/sinks, labelled $L$ and $R$ for the left and right components respectively. }\label{2d-2-figura}
\end{figure}

\vskip.2cm
\noindent
{\bf Acknowledgments.} We thank L. Bertini and A. Faggionato for useful discussions.
D.G. acknowledges financial support from the Italian Research Funding Agency (MIUR) through
PRIN project ``Emergence of condensation-like phenomena in interacting particle systems: kinetic and lattice models'', grant no.\ 202277WX43.  R.J.H. benefited from the kind hospitality of the National Institute for Theoretical Physics (NITheP) Stellenbosch at the beginning of this work, and the Stellenbosch Institute for Advanced Study (STIAS) at the end; she is also grateful for support from the London Mathematical Laboratory in the form of an External Fellowship.

\vskip.2cm
\noindent
{\bf Declarations}
The authors have no conflicts of interest. 

\appendix

\section{Large deviations for independent particles}
\label{appA}
In order to have a complete formal large deviation principle for the joint empirical measure and flow in our models of zero-range dynamics with superlinear growth we should verify the validity of several conditions as stated in \cite{BFG,BFG2}. In order to apply the contraction principle, as we do, we should verify in particular the tightness conditions in \cite{BFG2}. We give here an outline of the arguments.

The level 2.5 large deviations in the case of independent particles can be fully proved using the independence of the particles. We consider i.i.d.\ random walks $X_i(t)$ each of which is randomly created at time $\tau_i$ near the boundary (according to the rates of the model), evolves with $N_i$ jumps through the graph, and then disappears on exiting the system at time $\tau_i+T_i$. The times $\tau_i$ at which the random walks are created are Poissonian times of the sources; the independent times $T_i$ the walkers stay in the system are distributed with exponential tails (i.e.\ $\mathbb P(T_i>t)\leq e^{-k t}$, for a suitable positive constant $k$) and the numbers of jumps $N_i$ performed before exiting the system are also independent random variables with exponential tails ($\mathbb P(N_i>n)\leq e^{-K n}$, for another suitable positive constant $K$). From these ingredients, the joint LDP for empirical measure and flow in the case of independent particles, on suitable topology, can be proved using classic statements from i.i.d.\ random variables and Poissonizations.

\begin{remark}\label{remarktilde}
A very useful fact is the following formal statement that can be easily verified.  We consider here a single Poisson process $(N^\lambda_T)_{T\in \mathbb R}$ of parameter $\lambda$, noting that the argument can also be generalized to multiple processes and to marked Poisson processes; it applies, for example, to the LDP for independent particles described above, where the Poisson processes are marked by the random trajectories of the particles.  We let $\tau_i$ be the random events of our Poisson process and now construct $(\tilde N^\lambda_T)_{T\in \mathbb R}$ as the point process that has events at the times $\tau_i+T_i$ where $(T_i)_{\in \mathbb N}$ are i.i.d.\ non-negative random variables independent from $N^\lambda$ and having exponential tails. Then the sequences of random variables $\frac{N^\lambda_T}{T}$ and $\frac{\tilde N^\lambda_T}{T}$ have the same large deviation rate functional for large times.
\end{remark}

Since the zero-range interaction modifies the jump rates but not the probabilities of moving to particular nodes, a superlinear zero-range dynamics can be coupled to the independent-particle process having the same $(\lambda_x)_{x\in \partial \Omega_N}$ and the same $(c(x,y))_{(x,y)\in E_N}$ in such a way that:
\begin{itemize}
	\item particles are created simultaneously at the same random times $\tau_i$ in the two models and receive the same tag: particle number $i$ is created at time $\tau_i$ in both processes;
	
	\item  particles with the same tag in the two processes have exactly the same skeleton of the trajectory inside the system, i.e.\ they make the same jumps even if not at the same times;
	
	\item the particles performing the zero-range interaction jump faster than the free ones (and therefore exit from the graph before times $\tau_i+T_i$).
\end{itemize}
We now briefly justify the second and third points of this coupling.  [The first point should not require justification.]

The second point follows from the fact that we can code the trajectories of the labelled particles using the skeletons and the jump times. Since the interaction affects just the speed of the dynamics, we have that the probability of realization of the skeletons is always (for any function $g$) independent from the jump times.  Moreover, the probabilities of the different skeletons of the different particles are independent and each of them has the law of a discrete-time random walk with transition probabilities from $x$ to $y$ given by ${c(x,y)}/{\sum_{z:(x,z)\in E_N}c(x,z)}$.

The third point can be understood from the following construction, with $g(k)\geq ak$. We associate to each created particle, an internal clock which is a Poisson process of parameter $a$. For the case of independent particles, we let the particles jump, following the path of the independent skeleton, in correspondence to the events of the associated internal clock. On the other hand, for a zero-range dynamics with function $g$, each of the $k$ particles at site $x$ jumps according to a Poisson process with parameter $\frac{g(k)}{k}$.  Hence we must superpose on the internal clock of each particle an independent Poisson clock of parameter $\frac{g(k)-ak}{k}\geq 0$. With this construction, the particles of the zero-range dynamics follow the same skeleton as the independent ones but the zero-range jump times are always before the ones of the independent case.

\smallskip

To prove the joint empirical measure and empirical flow LDP for the superlinear zero-range process we can use the results in \cite{BFG,BFG2} once we have an exponential tightness condition.

The exponential tightness bound on the empirical flow can be obtained by observing that the total number of jumps in the interacting process on any subset of edges of the configuration space is dominated by the number of jumps of a free process for which every new particle created at time $\tau_i$ traverses its path up to the exit point instantaneously and therefore exits again at time $\tau_i$. For this effective independent process the exponential tightness bound can be obtained by classic arguments.

Positive recurrence, absence of explosion, and absence of condensation can all be shown by noting that the number of particles present in the interacting system at each time is always less than or equal to the number in the coupled free process (i.e.\ the original one where particle $i$ evolves in the time interval $(\tau_i, \tau_i+T_i]$).

As we concentrate in this paper on the computational aspects of the LDP rate functional, we do not further discuss the lengthy technical details needed to rigorously establish the validity of the underlying conditions.

\section{Criticality conditions}
\label{appB}

We here show that in the minimization computations in Theorem \ref{Th1} we can restrict consideration to zero-range perturbations of the original zero-range dynamics.

First we fix the measure $\mu=\mu[\phi]$ of the form \eqref{prod-zr} for arbitrary values of the $\phi_x$ and consider all the flows $Q$ such that the pair $(\mu, Q)$ are associated to a fixed current $j$ on the physical space \eqref{def-corrente-fisica}. We then consider the rate functional \eqref{zr-rate-gen} under such constraints, and show that the zero-range perturbation with $F$ the unique solution to equation \eqref{uniqueF} is a critical point. By the convexity of the rate functional \eqref{zr-rate-gen}, we have that this corresponds to the unique minimal point. The subsequent optimization with respect to the parameters $(\phi_x)_{x\in \Lambda_N}$ without constraints is the computation that we do elsewhere in the paper.

To justify this, we start by defining the real function $f(t):=I(\mu[\phi],Q[\phi]+tQ_C)$ where $Q_C$ is a flow associated to the cycle $C$ as in \eqref{QC} and $Q[\phi]$ is the flow \eqref{min-flu-zr} with $F$ given in \eqref{Fsol}. Such changes to the flow argument are generic variations that preserve the divergence-free condition. To preserve also the condition that along an edge $(x,y)$ of the physical lattice there is a fixed current $j(x,y)$, the cycle $C$ on the configuration space has to be such that if it contains an edge corresponding to a jump of one particle from $x$ to $y$ then it must also contain an edge corresponding to a jump from $y$ to $x$. Let $C=(\eta_1,\dots, \eta_{N+1}=\eta_0)$ where
$\eta_{i+1}=\eta_i^{(x_i,y_i)}$. We have therefore
\begin{equation}
f'(0)=\sum_i F(x_i,y_i)=0\,,
\end{equation}
where the last equality follows by the antisymmetry of $F$ and the fact that the number of pairs $(x,y)$ in the cycle $C$ is equal to the number of pairs $(y,x)$.

\smallskip

Second, to conclude that we are indeed obtaining the global minimizer, it remains to prove that the global minimizer $(\mu,Q)$ has a measure $\mu$ of the form \eqref{prod-zr}, i.e.\ is $\mu[\phi]$ for a suitable $\phi$. To do this we show below that there exists a flow $Q[\phi]$, of the form \eqref{min-flu-zr} with $F$ satisfying \eqref{Fsol}, such that the minimum of $I( \mu,Q[\phi])$ over all possible probability measures $\mu$ is obtained at $\mu[\phi]$.

This last fact, together with the previous results, implies that for any current $j$, there exists a critical point of the form $(\mu[\phi],Q[\phi])$ and, by convexity, that is therefore the global minimizer.

For a fixed flow $Q[\phi]$, the stationary conditions in $\mu$ for $I(\mu, Q[\phi])$ computed at $\mu[\phi]$ are
\begin{align}
  \sum_{x\in \partial \Omega_N}\sum_{y:(x,y)\in E_N}&\lambda_xc(x,y)\left(e^{F(x,y)}-1\right) \notag \\
  &+\sum_{x\in \Omega_N} g(\eta_x)\Big[\sum_{y:(x,y)\in E_N}c(x,y)\left(e^{F(x,y)}-1\right)\Big]-k=0\,,\label{ultimaeq}
\end{align}
for any $\eta$, where $k$ is a Lagrange multiplier. If we substitute \eqref{Fsol} in the above equation, we obtain
that for any fixed $x\in \Omega_N$ the quantity in the squared parenthesis in the middle term is
\begin{equation}
\frac{1}{2\phi_x}\left(\sum_{y:(x,y)\in E_N}j(x,y)+\sqrt{j(x,y)^2+4\phi_x\phi_yc(x,y)c(y,x)}\right)-\sum_{y:(x,y)\in E_N}c(x,y)\,,
\end{equation}
which evaluates to zero by the fact that $j$ is divergence-free and condition \eqref{stazd} holds. Since the first term in \eqref{ultimaeq} does not depend on $\eta$, we can fix the Lagrange multiplier $k$ in such a way that the condition is satisfied for any $\eta$.

\section{Proof of Lemma \ref{quadratico}}
\label{apquadratico}

	Squaring twice \eqref{staz-fi} we get
	\begin{equation}\label{staz-qua}
	\left(4\phi_x-\phi_{x+1}-\phi_{x-1}\right)^2-4\phi_{x-1}\phi_{x+1}=4j^2\,,\qquad x=1,\dots,N\,.
	\end{equation}
	An elementary computation shows that any quadratic function \eqref{sol-zr}
	will solve \eqref{staz-qua} provided we have $B^2-4AC=j^2$.
	If we also impose on \eqref{sol-zr}
	the boundary conditions $\phi_0=\lambda_L$ and $\phi_{N+1}=\lambda_R$ we obtain
	the values of the constants in~\eqref{coeffa}.
	
	There is also another possible choice of the coefficients $A$, $B$ and $C$ which is compatible with the boundary values. It corresponds to changing the sign in front of the square root terms in \eqref{coeffa}. This alternative choice has to be rejected since it does not give a solution to \eqref{staz-fi}. Indeed a solution $\{\phi_x\}_{x=1}^N$
	of \eqref{staz-fi} is necessarily non-negative while, with this alternative choice of the sign, the coefficient $C$ is always positive and consequently the function $\phi$ assumes its minimum value
	at its unique critical point $-\frac{B}{2C}$. In this case we have that this critical point is always
	contained in $(0,N+1)$ and moreover the corresponding minimal value is always negative.
	Another elementary computation allows us to show that \eqref{sol-zr} with the coefficients as in
	\eqref{coeffa} is indeed a solution to \eqref{staz-fi}.
	
	It remains to show that any solution to \eqref{staz-fi} is necessarily of the form \eqref{sol-zr}.
	Let us for a moment forget about the boundary value $\phi_{N+1}=\lambda_R$ and consider a solution
	of \eqref{staz-fi} with only the condition $\phi_{0}=\lambda_L$. Since we can write
	$$
	\phi_{x+1}=\frac{\left(4\phi_x-\sqrt{j^2 +4\phi_{x-1}\phi_x}\right)^2-j^2}{4\phi_x}\,, \qquad x=1,\dots ,N\,,
	$$
	we deduce that there is a one-parameter family of solutions depending on the value of $\phi_1$.
	In particular, if we fix the values $\phi_0=\lambda_L$ and $\phi_1=\phi^*_1$ there is at most one solution to
	\eqref{staz-fi}. Our argument is concluded if we show that this unique solution can be written
	in the form \eqref{sol-zr} with $B^2-4AC=j^2$. If we consider \eqref{sol-zr} and impose the conditions
	$\phi_0=\lambda_L$ and $\phi_1=\phi^*_1$ and use the relation $B^2-4AC=j^2$
	we obtain
	\begin{equation}\label{coeff-un}
	\left\{
	\begin{array}{l}
	A=\lambda_L \\
	B= \phi^*_1-\lambda_L-C\\
	C=\lambda_L+\phi^*_1\pm\sqrt{j^2+4\lambda_L\phi^*_1}\,.
	\end{array}
	\right.
	\end{equation}
	The choice of the  $+$ sign in \eqref{coeff-un} does not give a solution to \eqref{staz-fi}.
	Indeed as before in this case the coefficient $C$ is always positive and consequently the function $\phi$ assumes its minimum value at its unique critical point $-\frac{B}{2C}$. This critical point is always
	contained in $(0,N+1)$  and moreover the corresponding minimal value is negative.
	The choice of the  $-$ sign instead gives the unique solution to \eqref{staz-fi}.

\end{document}